\definecolor{darkgreen}{RGB}{63,167,129}
\definecolor{darkblue}{RGB}{41,60,134}
\definecolor{darkorange}{RGB}{255,167,15}
\definecolor{lightgray}{RGB}{127,127,127}
\newcommand{\RR}{\mathbb{R}}
\newcommand{\ZZ}{\mathbb{Z}}
\newcommand{\NN}{\mathbb{N}}
\newcommand{\scenarios}{\mathcal{S}}
\newcommand{\scenariosDualPath}[2]{\mathcal{S}(#1,#2)}
\newcommand{\paths}{\mathcal{P}}
\newcommand{\sectionheadline}[1]{\paragraph{\tb{#1.}}}
\newcommand{\tb}[1]{\textbf{#1}}
\newcommand{\ti}[1]{\textit{#1}}
\newcommand{\omsClaim}[1]{\vspace{.2cm}\parbox{.95\linewidth}{\ti{Claim:}\; #1}\vspace{.2cm} \\}
\title{Robust Flows over Time: Models and Complexity Results}
\author{Corinna Gottschalk$^1$ \and Arie M.C.A. Koster$^2$ \and Frauke Liers$^3$ \and Britta Peis$^1$ \and Daniel Schmand$^1$ \and Andreas Wierz$^1$}
\institute{$^1$School of Business and Economics, RWTH Aachen University, Germany, \email{\{gottschalk, peis, schmand, wierz\}@oms.rwth-aachen.de}
\and $^2$Lehrstuhl II für Mathematik, RWTH Aachen University, Germany, \email{koster@math2.rwth-aachen.de}
\and $^3$Lehrstuhl Wirtschaftsmathematik, FAU Erlangen-Nürnberg, Germany, \email{frauke.liers@math.uni-erlangen.de}}
\begin{document}

\maketitle

\begin{abstract}
We study dynamic network flows with uncertain input data under a robust optimization perspective.
In the dynamic maximum flow problem, the goal is to maximize the flow reaching the sink within a given time horizon $T$, while flow requires a certain travel time to traverse an edge.

In our setting, we account for uncertain travel times of flow.
We investigate maximum flows over time under the assumption that at most $\Gamma$ travel times may be prolonged simultaneously due to delay.
We develop and study a mathematical model for this problem.
As the dynamic robust flow problem generalizes the static version, it is NP-hard to compute an optimal flow.
However, our dynamic version is considerably more complex than the static version. We show that it is NP-hard to verify feasibility of a given candidate solution.
Furthermore, we investigate temporally repeated flows and show that in contrast to the non-robust case (that is, without uncertainties) they no longer provide optimal solutions for the robust problem, but rather yield a worst case optimality gap of at least $T$.
We finally show that the optimality gap is at most $O(\eta k \log T)$, where $\eta$ and $k$ are newly introduced instance characteristics and provide a matching lower bound instance with optimality gap $\Omega(\log T)$ and $\eta = k = 1$.
The results obtained in this paper yield a first step towards understanding robust dynamic flow problems with uncertain travel times.
\end{abstract}

\section{Introduction}
\label{sec:introduction}

Many relevant applications in the context of routing or logistics call for a temporal component that is part of the input and the actual solution.
In classical flow theory, flow traverses the network in a static fashion, that is, the solutions need to obey capacity restrictions - and possibly additional constraints.
In many real-world applications, however, flow takes some time in order to traverse a network edge.
Hence, a temporal dimension has to be introduced into the models.
Dynamic flow problems that take time into account have been studied for more than half a century.
Dynamic flow problems are also referred to as \emph{flow over time} problems in the literature and in this paper.
Despite the relevance of the topic and the existence of fascinating results, few textbooks cover this topic.
We refer to \cite{skutella2009introduction} for an introduction to flow over time problems.
Furthermore, real-world applications are often also affected by measurement errors as well as by a large degree of uncertainty.
In such situations, the classical models are usually highly inaccurate.
An application may be distribution networks such as gas
\cite{koch2015evaluating} or water networks.
In gas networks, for example, the roughness of the pipes is uncertain due to contamination or aging processes and can only be measured with large effort.
The roughness strongly influences the friction and thus the travel time of gas along a pipe.
As these uncertainties might influence the decision about feasibility or infeasibility of the corresponding complex optimization tasks, a worst-case robust perspective is appropriate.

Although relevant applications exist, little is known about flow over time problems that are protected against uncertain conditions.
In this work, we provide a first step towards studying their properties.
We first define an appropriate model for determining robust optimum flows over time subject to uncertain travel times.
We start from the known classical flow over time theory that ignores uncertainties, that is, the nominal case. Adapting the $\Gamma$-robustness model introduced by
Bertsimas and Sim \cite{BeSi03} that is often applied to combinatorial
optimization problems under uncertainty, we develop a robust flow over time framework.
In this model, the degree of protection against uncertainties can be controlled by a parameter $\Gamma$.
For uncertain objective functions, a robust optimum is a solution with the best guaranteed cost under the assumption that at most $\Gamma$ many objective function coefficients attain their worst-case realization.
Thus, the value of $\Gamma$ determines the conservatism of the solution.

Applying this modeling framework for flows over time, we study the following optimization task.
Let a time horizon $T$ and a network with travel times and potential delays on the edges be given.
Maximize the minimum amount of flow that can be sent through the network within time horizon $T$ under the assumption that at most $\Gamma$ edges are delayed.
In order to study the problem in its most basic version, we restrict ourselves to the case in which flow is not allowed to wait at any intermediate vertex.
For example, this is the case in communication, water, or gas networks without buffer capacities  on intermediate vertices.
Although the obtained model seems quite punishing, it is a natural robust counterpart of network flows over time with uncertain travel times.
We provide a formal problem definition in Section~\ref{sec:modelling-techniques}.

\sectionheadline{Related Work}
\label{sec:related-work}
The concept of flows over time was introduced in \cite{ford1958constructing}.
They also showed how to find a maximum $s$-$t$-flow over time using one minimum-cost flow computation, thus generalizing the concept of static maximum $s$-$t$ flows.
Detailed introductions to flows over time and further references can be found in the surveys by Aronson \cite{aronson1989survey} and Skutella \cite{skutella2009introduction}.

In general, the goal of robust optimization is to find a solution that is feasible and as good as possible for any input data in a given uncertainty set.
For a comprehensive introduction to robust optimization, we refer to \cite{ben2009robust}.
In this paper, we consider $\Gamma$-robustness, a concept introduced by Bertsimas and Sim \cite{BeSi03}.
Here, the uncertainty set is determined by $\Gamma$:
Protection is sought against all scenarios in which the input deviates from the nominal input data in at most $\Gamma$ elements simultaneously.
In the context of static network flows, such a scenario could be the failure of at most $\Gamma$ edges in a given network.
Aneja, Chandrasekaran and Nair \cite{aneja2001maximizing} showed how to solve the $\Gamma$-robust maximum $s$-$t$-flow problem in polynomial time for $\Gamma = 1$, even for the case where the flow is required to be integral.
Du and Chandrasekaran \cite{du2007wronghardness} claimed that the $\Gamma$-robust maximum $s$-$t$-flow problem is NP-hard for $\Gamma >1$.
However, Matuschke et al.\ \cite{2015wronghardnessstatic} recently showed that the proof is incorrect.
Personal communication with Disser and Matuschke \cite{Disser2015newhardnessstatic} indicates that this problem is already NP-hard
if $\Gamma$ is not bounded by any constant.
Still, the complexity of the static $\Gamma$-robust flow problem where $\Gamma$ is bounded by any constant bigger than $1$ is open.
Note that in this model, it is not possible to reroute after the edge failure.

In \cite{bertsimas2013robust}, Bertsimas, Nasrabadi and Stiller investigate several variants of robust flow problems.
As far as we know, \cite{bertsimas2013robust} is the only work that considers robustness in a flow over time setting.
In particular, they study a so-called adaptive model where rerouting after edge-failure is allowed within bounds determined by the initial flow and show that this problem is weakly NP-hard.
In contrast, in this paper, rerouting after edge failures is not permitted and we consider a more general robustness scenario:
Besides total edge failures, the travel time can also increase by a finite amount.

Another problem that is highly related to our work is the network interdiction problem.
In contrast to robust maximum flow problems, where the goal is to find a flow that is good no matter which scenario of the uncertainty set is realized, the network interdiction problem takes the opposite perspective:
Here, the goal is to find a set of edges whose deletion minimizes the amount of flow that can be sent in the remaining network.
Wood \cite{W93} showed NP-hardness of this problem.

Köhler and Skutella \cite{koehler2005} consider a flow over time problem where traffic times depend on the actual load of an edge.
While the resulting flow problem is NP-hard, in contrast to our model, temporally repeated flows can be used to obtain a $2$-approximation.

\sectionheadline{Our Contribution}
First, we discuss generalizations of well-studied concepts for the nominal maximum flow over time problem towards robustness.
For uncertain travel times, we point out problems with s straightforward generalization of encoding flow rates on the edges. We show that this concept as well as time-expanded networks are no longer appropriate.
In contrast, we introduce a non-standard, more viable solution encoding in terms of a path decomposition with associated flow rates and dispatch intervals.
We call this concept \emph{general solutions (to the robust maximum flow over time problem)}.
Whereas these two descriptions are equally powerful in the non-robust, i.e.\ \emph{nominal} flow over time case, this is not true for the robust flow over time model.
We also study a robust counterpart of \emph{temporally repeated flows} due to their simple solution encoding, computational complexity and optimality in the nominal case \cite{skutella2009introduction}.

\newcommand{\scs}{\scriptsize}
\newcommand{\fns}{\footnotesize}
\begin{table}[tb]
 \begin{center}
 \begin{tabular*}{\linewidth}{
@{}l
@{\extracolsep{\fill}} l
@{\extracolsep{3ex}}   l
@{\extracolsep{3ex}}   l
@{\extracolsep{1ex}}   l
}
\toprule
\multirow{4}{*}{\normalsize{Task}}& \multirow{4}{*}{\normalsize{\shortstack[l]{static robust\\ max flow}}} & \multicolumn{3}{c}{\normalsize{robust max flow over time}}\\ \cmidrule{3-5}
&     & \small{general solutions} & \multicolumn{2}{c}{\small{temporally repeated solutions}} \\ \cmidrule{4-5}
       &   &  & \multicolumn{1}{c}{\scs{arbitrary path length}}  & \scs{$T$-bounded path length}  \\
       \midrule
  max, $\Gamma = 1$ & poly time$^1$ \cite{aneja2001maximizing} & ?$^1$ & ? & poly time (Prp. \ref{prop:temp-rep:LP-T-not}) \\[1ex]
  \multirow{2}{*}{max, $\Gamma$ bounded} & \multirow{2}{*}{?} & \multirow{2}{*}{\shortstack[l]{at least as hard as\\ static case$^2$ (Prp. \ref{prop:modelling-techniques:robust-flow-np-hard})}} & \multirow{2}{*}{\shortstack[l]{at least as hard as\\ static case (Prp. \ref{prop:temp-rep:complexity})}} & \multirow{2}{*}{poly time (Prp. \ref{prop:temp-rep:LP-T-not})} \\
  \\[1ex]
 max, $\Gamma$ arb. & NP-hard \cite{Disser2015newhardnessstatic} & NP-hard$^2$ (Prp. \ref{prop:modelling-techniques:robust-flow-np-hard}) & NP-hard (Prp. \ref{prop:temp-rep:complexity}) & poly time (Prp. \ref{prop:temp-rep:LP-T-not}) \\[1ex]
  \multirow{2}{*}{\shortstack[l]{feasibility\\ check}} & \multirow{2}{*}{poly time} & \multirow{2}{*}{NP-hard$^2$ (Thm. \ref{thm:modelling-techniques:separation-np-hard})} & \multirow{2}{*}{poly time} & \multirow{2}{*}{poly time}\\
  \\
  \bottomrule
 \end{tabular*}
 \end{center}
 \caption{An overview of the complexity results from Section \ref{sec:computational-complexity}. \\ \fns{$^1$ If flow rates must be chosen integral, the problem is polynomial time solvable for the static case \cite{aneja2001maximizing} and is inapproximable within any factor for general solutions (see Proposition \ref{prp:SSRFTNE-integral-np-hard-inapproximable}).} \\ \fns{$^2$ These results do not only hold for arbitrary path lengths, but also for instances with the $T$-bounded path length property.}
 }
 \label{tab:results-computational-complexity}
\end{table}

\begin{table}[tb]
 \begin{center}
 \begin{tabular*}{\linewidth}{
@{}l
@{\extracolsep{\fill}} l
@{\extracolsep{5ex}}   l
}
\toprule
\normalsize{delay restriction} & \normalsize{lower bound} & \normalsize{upper bound} \\
\midrule
  $\Delta \in \{0,\infty\}$ & $\max\{T,\Gamma\}$ (Prp. \ref{prop:temp-rep:linear-gap}) & $O(k \log T)$ (Thm. \ref{thm:temp-rep:k-log-T-gap}) \\
  $T$-bounded path length & $\max\{\log T,\log \Gamma\}$ (Prp. \ref{prop:temp-rep:log-gap}) & $O(\eta k \log T)$ (Thm. \ref{thm:temp-rep:delta-arbitrary}) \\
  none & $\max\{T,\Gamma\}$ (Prp. \ref{prop:temp-rep:linear-gap}) & $O(\eta k \log T)$ (Thm. \ref{thm:temp-rep:delta-arbitrary}) \\
\bottomrule
 \end{tabular*}
 \end{center}
 \caption{Outline of the main results regarding optimality gaps of temporally repeated solutions from Section \ref{sec:bounds}.}
 \label{tab:results-bounds}
\end{table}

After introduction and discussion of the models, we address the computational complexity of both solution variants. Our results are summarized in Table \ref{tab:results-computational-complexity}.
The special case of $\Gamma$-robust flow over time with zero travel times and infinite delays on all edges reduces to the \emph{static} $\Gamma$-robust flow problem which searches for an optimum static flow which is robust against up to $\Gamma$ edge failures (cf.\ \emph{related work}).
We show in Proposition \ref{prop:modelling-techniques:robust-flow-np-hard} that our problem contains the static case.
Irrespective of this, our problem for general solutions is considerably more complex.
We show by a reduction from maximum clique that even the verification of feasibility of arbitrary solution candidates is an NP-hard problem (Theorem \ref{thm:modelling-techniques:separation-np-hard}).
By a reduction from the two edge-disjoint paths problem we show that, again in contrast to the static version, the optimization problem is inapproximable for $\Gamma = 1$, if flow rates are required to be integral (Proposition~\ref{prp:SSRFTNE-integral-np-hard-inapproximable}).

For temporally repeated flows, we observe that the computational
complexity depends on the actual edge delays.
In general, they inherit the same complexity status as general solutions (Proposition \ref{prop:temp-rep:complexity}).
However, if the maximum possible path length of each path is bounded by the time horizon, the status changes.
In fact, we show that in this case optimal temporally repeated flows can be computed in polynomial time (Proposition \ref{prop:temp-rep:LP-T-not}).
We formalize this concept in Section \ref{sec:computational-complexity} and call it the \emph{$T$-bounded path length} property.
Note that our hardness results for general solutions also hold for instances with the $T$-bounded path length property.

Subsequently, we study the quality of temporally repeated solutions, when compared to general solutions (Table \ref{tab:results-bounds}).
Temporally repeated solutions do not only have benefits with respect to the computational complexity, but also have a simple solution encoding and are very well studied for the nominal case \cite{skutella2009introduction}.
In the nominal case, it is well known that temporally repeated flows
yield optimal solutions.
Under uncertainty, however, we show that temporally repeated flows can inherit a large optimality gap.

If the delay on all edges is either infinitely large or zero, i.e.\ $\Delta \in \{0,\infty\}$, we show that the optimality gap can be as large as $\max\{T,\Gamma\}$ (Proposition \ref{prop:temp-rep:linear-gap}).
By using a non-trivial primal-dual fitting approach, we show that it is always upper bounded by $O(k \log T)$ (Theorem \ref{thm:temp-rep:k-log-T-gap}).
Here, $k$ is a parameter that is specific for the instance. It does
not depend on the delay but only on the graph structure, travel times
and the time horizon (The definition of $k$ is formalized in Section \ref{sec:bounds}).
Although the instances used in Proposition \ref{prop:temp-rep:linear-gap} have
$k = T$, other classes of instances exist for which the value of $k$
is very small. For example, acyclic digraphs with the $T$-bounded path length property have $k = 1$.

For instances with the $T$-bounded path length property, we provide lower bound examples with an optimality gap of $\max\{\log T,\log \Gamma\}$ (Proposition \ref{prop:temp-rep:log-gap}).
For arbitrary delays, we prove an upper bound of $O(\eta k \log T)$ (Theorem \ref{thm:temp-rep:delta-arbitrary}).
Here, $\eta$ accounts for the relative amount of flow that can be destroyed by scenarios on single paths in the worst case.
Again, this parameter that is again characteristic for an instance is formalized in Section \ref{sec:bounds}.
This bound is tight for the instance from Proposition \ref{prop:temp-rep:log-gap} as, in this case, $k = \eta = 1$.

Finally, if we fix a graph with travel times and finite delays and let the time horizon tend to infinity, we observe that temporally repeated solutions tend to optimality (Proposition \ref{prop:temp-rep:asymptotic-bound}).

\sectionheadline{Outline}
The remainder of this paper consists of three main sections.
Section~\ref{sec:modelling-techniques} is devoted to modelling techniques for the robust maximum flow over time problem.
It introduces a model and two solution concepts which we study subsequently.
Section~\ref{sec:computational-complexity} provides insight in the computational complexity of both concepts under several perspectives.
Finally, Section~\ref{sec:bounds} evaluates the solution quality of optimum temporally repeated flows with respect to their optimality gap to general solutions.
The paper is concluded with final remarks and a discussion of open questions in Section~\ref{sec:conclusions}.

\section{Modeling Techniques for the Robust Maximum Flow over Time Problem}
\label{sec:modelling-techniques}

\sectionheadline{Nominal Maximum Flow over Time Problem}
An instance of the nominal maximum flow over time problem consists of a directed graph $G = (V,E)$ with source and destination vertices $s,d \in V$ and a time horizon $T \in \NN$.
Each edge is equipped with a capacity $u: E \rightarrow \NN$ and a travel time $\tau: E \rightarrow \NN$.
Flow entering edge $e$ at some time $\theta$ leaves the edge at its head at time $\theta + \tau_e$.
We seek to maximize the total amount of flow sent from $s$ to $d$ within the time horizon.
In particular, we use the so-called continuous time model. For further discussion of the relationship between continuous and discrete time models, we refer to \cite{koch2011}.

In classical flow theory, a solution is encoded by a set of Lebesgue-integrable functions $f_e: \RR \rightarrow \RR_+$ for all $e \in E$
describing the rate of flow entering edge $e$ at time $\theta \in \RR$. We assume $f_e(\theta) = 0$ for all $e \in E$ and all $\theta \in \RR \setminus [0,T)$.
A feasible solution obeys the capacity limit, that is, $f_e(\theta) \leq u_e$ for all edges and for all $\theta \in [0,T)$.
Depending on the situation to be modeled, waiting at intermediate vertices may or may not be allowed.
For example, if vertices do not have any buffer capacity, it is impossible to store flow at intermediate vertices.
In such situations, \emph{strict} flow conservation is required, that is, the total amount of flow leaving a
vertex up to any point in time $\xi$ is exactly the total amount of flow entering that vertex up to the same time. We have
\[\sum_{e \in \delta^-(v)} \int_0^{\xi - \tau_e} f_e(\theta) d\theta = \sum_{e \in \delta^+(v)} \int_0^{\xi} f_e(\theta) d\theta\]
for all $\xi \in [0,T)$ and for all $v \in V \setminus \{s,d\}$.
The objective function value of a flow $f$ is defined as the total amount of flow reaching vertex $d$ up to time $T$, that is, \smash{$\sum_{e \in \delta^-(d)}\int_0^{T - \tau_e} f_e(\theta) d\theta$}.
Here, we assume that $d$ has no outgoing edges, otherwise, we would have to subtract the amount of flow leaving $d$ up to time $T$.
We make this assumption throughout this paper. Note that flow that
arrives at $d$ after $T$ is allowed but does not give any contribution
to the value of the objective function.

Ford and Fulkerson \cite{ford1958constructing} showed that relaxing strict flow conservation to weak flow conservation, where flow may wait at intermediate vertices, does not change the optimal objective function value in the nominal case.
That is, they proved that there always exists an optimal flow over time which never stores flow at intermediate vertices.

\sectionheadline{$\Gamma$-Robust Flows over Time}
In this paper, we assume that the travel times $\tau_e$ are uncertain and may deviate by a certain delay $\Delta_e \in \NN$.
We follow the $\Gamma$-robust approach suggested by Bertsimas and Sim \cite{BeSi03}:
for a given integer $\Gamma \in \NN$, we look for a maximum flow over time that is robust against any possible scenario of up to $\Gamma$ edge delays.
This situation can be interpreted as a two-player game in which the first player (the "flow player") decides on a flow over time.
Afterwards, the second player (the "bad adversary") chooses at most $\Gamma$ edges on which she delays the travel times from $\tau_e$ to $\tau_e+\Delta_e$.
The adversary's goal is to minimize the total throughput, or to violate the capacity constraints.
In the robust setting, it might well be that a nominal flow without waiting times has to wait in certain scenarios.
In situations with implied strict flow conservation, such a flow is no longer feasible.
In this paper, we do not consider such situations, and therefore demand strict flow conservation no matter how the adversary chooses edges to be delayed.
We remark that requiring weak flow conservation only introduces additional challenges, as it would have to be decided which flow particles need to wait and which may pass on.

Furthermore, we do not require the network to be empty at time $T$, i.e.\ we allow flow to enter destination $d$ after time $T$. The flow arriving at $d$ after $T$ is not counted in the objective function.
This yields some freedom in reacting to different possible scenarios.
Otherwise, it would be necessary to ensure additionally that - no matter which edges the adversary chooses to delay - all flow can reach $d$ by time $T$.
This would result in a very restrictive model, where we would not be allowed to use any edges whose travel time may exceed $T$. Note that the capacity constraints and the flow conservation constraints are only present up to time $T$ for the same reason. For example, we also allow a feasible solution to violate the edge capacities after $T$.

\sectionheadline{Mathematical Model}

A solution formulated in terms of flow rates $f_e(\theta)$ seems to be unsuitable.
There, the actual flow rate at any intermediate edge depends on the corresponding scenario.
As we are interested in solutions that can be described independently from the scenario, we formulate solutions in terms of $s$-$d$-paths.
A nominal flow over time satisfying strict flow conservation can always be described by a path decomposition $(f_P)_{P\in \mathcal{P}}$, where $\mathcal{P}$ is the set of
$s$-$d$-paths and $f_P: \RR \to \RR_+$ with $f_P(\theta) = 0$ for all $\theta \in \RR \setminus [0,T-\tau(P))$.
For simplicity reasons, in this paper we only allow flow on simple paths.
The function assigns a flow rate $f_P(\theta)$ to each point in time $\theta$ that describes the rate at which flow is sent into path $P$.
Note that this path decomposition is not necessarily unique.
Already in the static robust maximum flow problem, Bertsimas et al.\ have shown that the robust flow value depends on the path decomposition \cite{bertsimas2013robust}.
In this work, however, we do not decompose a flow, but rather define it on the set of all simple paths.

Let $\scenarios = \{z \in \{0,1\}^{|E|} : \sum_{e\in E} z_e \leq \Gamma\}$ denote the set of admissible scenarios, that is, representing all combinations of at most $\Gamma$ delays.
Here, $z_e$ is a decision variable with $z_e = 1$ if and only if $e$ is delayed.
We call a flow over time $(f_P)_{P\in \mathcal{P}}$ \emph{feasible} if the capacity constraints are obeyed under every possible scenario $z\in \scenarios$.
That is, if
\[ \sum_{P\in \mathcal{P}: e\in P} f_P(\theta_{z,e}) \le u_e \quad \forall e\in E, \theta \in [0, T), z\in \scenarios, \]
where
\[\theta_{z,e}=\theta-\sum_{e'\in E: e' <_P e} (\tau_{e'}+\Delta_{e'} z_{e'})\]
denotes the departure time at $s$ of a flow particle on path $P$ which enters edge $e$ at time $\theta$ under scenario $z$.
As usual, $\{e'\in E: e' <_P e\}$ denotes the set of edges that need to be traversed on path $P$ before edge $e$ is reached.
Note that negative values of $\theta_{z,e}$ imply that these flow particles would have started at $s$ before time zero (which is not possible).

The \emph{robust value} of a feasible flow over time $f=(f_P)_{P\in \mathcal{P}}$ is
\[\min_{z\in \scenarios} \sum_{P\in \mathcal{P}} \int_0^{\max\{0,T - \tau(P) - \Delta_z(P)\}} f_P(\nu) d\nu, \]
where $\tau(P) + \Delta_z(P) = \sum_{e\in P} (\tau_e + \Delta_e z_e)$ denotes the travel time on path $P$ in scenario $z$.

\begin{remark}
The nominal maximum flow over time problem, that corresponds to $\Gamma = 0$, can be modeled by an auxiliary time-expanded network.
This graph contains a vertex $(v, t)$ for each relevant time $t \in\{0, \ldots, T - 1\}$.
Edges between the vertices $(v,t)$ and $(w,t + \tau_{vw})$ model edges between vertices $v$ and $w$ with travel time $\tau_{vw}$ in the original graph.
This is not possible in the robust version, since each scenario would induce a different time-expanded network.
Thus, the choice of $z$ defines the topology of the network, and a straight-forward usage of time-expanded networks cannot be applied here.
\end{remark}

By an averaging argument, we show next that it is sufficient to consider functions $f_P$ which are piecewise constant on all integral unit length intervals.

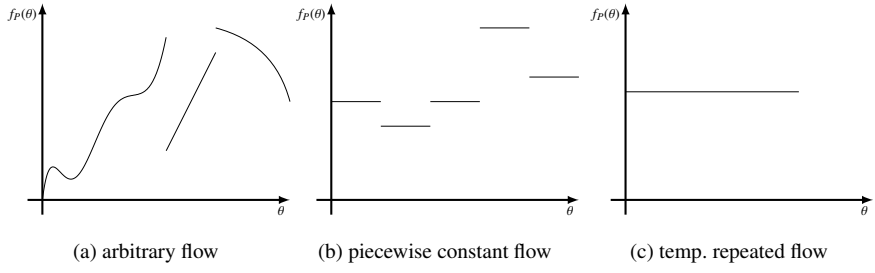
\begin{figure}[t]
\centering
    \begin{subfigure}[b]{0.32\textwidth}  \centering
      \resizebox{\textwidth}{!}{
  \begin{tikzpicture}
\draw[fill=white,draw=white,use as bounding box] (-0.75,-0.5) rectangle (5,4);
         %Koordinatensystem
         \draw [->,>=latex,very thick] (-0.3,0)--(5,0) node[below left] {$\theta$};
         \draw [->,>=latex,very thick] (0,-0.3)--(0,4) node[below left] {$f_P(\theta)$};
        %Kurve
        \draw[smooth,samples= 200,domain=0.0:2.5]
            plot(\x,{8*\x-32.4*\x^2+53.48*\x^3-42.11*\x^4+17.594*\x^5 -3.99*\x^6+0.465713*\x^7-0.0217374*\x^8});
        \draw (2.5, 1) -- (3.5, 3);
        \draw[bend left] (3.5, 3.5) to (5, 2);
\end{tikzpicture}}
  \caption{arbitrary flow}
  \label{fig:arbitrary_flow}
 \end{subfigure}%
\begin{subfigure}[b]{0.32\textwidth}  \centering
  \resizebox{\textwidth}{!}{
  \begin{tikzpicture}
\draw[fill=white,draw=white,use as bounding box] (-0.75,-0.5) rectangle (5,4);
         %Koordinatensystem
         \draw [->,>=latex,very thick] (-0.3,0)--(5,0) node[below left] {$\theta$};
         \draw [->,>=latex,very thick] (0,-0.3)--(0,4) node[below left] {$f_P(\theta)$};
        %Kurve
         \draw (0,2) to (1,2);
         \draw (1,1.5) to (2,1.5);
         \draw (2,2) to (3,2);
         \draw (3,3.5) to (4,3.5);
         \draw (4,2.5) to (5,2.5);

\end{tikzpicture}}
  \caption{piecewise constant flow}
  \label{fig:triple_flow}
 \end{subfigure}
 \begin{subfigure}[b]{0.32\textwidth}  \centering
  \resizebox{\textwidth}{!}{
  \begin{tikzpicture}
\draw[fill=white,draw=white,use as bounding box] (-0.75,-0.5) rectangle (5,4);
         %Koordinatensystem
         \draw [->,>=latex,very thick] (-0.3,0)--(5,0) node[below left] {$\theta$};
         \draw [->,>=latex,very thick] (0,-0.3)--(0,4) node[below left] {$f_P(\theta)$};
        %Kurve
         \draw (0,2.2) to (3.5,2.2);
\end{tikzpicture}}
  \caption{temp.\ repeated flow}
  \label{fig:temp_rep_flow}
 \end{subfigure}
 \caption{Comparison of different models for flows over time.}
 \label{fig:comparision_between_models}
\end{figure}

\begin{restatable}{proposition}{prpssrftpiecewiseconstant}
\label{prop:modelling-techniques:piecewise-constant}
For every solution to the robust maximum flow over time problem there exists a solution with the same objective function value which consists only of piecewise constant functions $f_P$
whose values change only at integer points.
\end{restatable}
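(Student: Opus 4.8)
The plan is to prove the statement by a direct averaging argument: starting from an arbitrary feasible solution $(f_P)_{P\in\mathcal{P}}$ to the robust maximum flow over time problem, replace each rate function by its average over integer unit-length intervals. Concretely, for $\theta\in\RR$ set
\[
\bar f_P(\theta)\;:=\;\int_{\lfloor\theta\rfloor}^{\lfloor\theta\rfloor+1} f_P(\nu)\,d\nu ,
\]
so that every $\bar f_P$ is nonnegative, integrable, constant on each interval $[n,n+1)$ with $n\in\ZZ$, and changes value only at integer points --- precisely the claimed shape. First I would record the integrality facts that make the construction work: since $\tau_e,\Delta_e\in\NN$ and $T\in\NN$, for every path $P$, edge $e\in P$ and scenario $z\in\scenarios$ the shift $c_P:=\sum_{e'<_P e}(\tau_{e'}+\Delta_{e'}z_{e'})$ and every integration bound $T-\tau(P)-\Delta_z(P)=T-\sum_{e\in P}(\tau_e+\Delta_e z_e)$ are integers. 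From this it follows at once that $\bar f_P$ is supported on $[0,\,T-\tau(P))$: for $n<0$ or $n\ge T-\tau(P)$ the interval $[n,n+1)$ is disjoint from $[0,T-\tau(P))$, where $f_P$ vanishes, hence $\bar f_P$ vanishes there as well.

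Next I would check feasibility of $(\bar f_P)_{P\in\mathcal{P}}$, i.e.\ the capacity constraints in every scenario. Fix $e\in E$, $z\in\scenarios$ and $\theta\in[0,T)$, and abbreviate the shift along $P$ by $c_P\in\ZZ$ for each $P\ni e$. Since $c_P$ is an integer we have $\lfloor\theta-c_P\rfloor=\lfloor\theta\rfloor-c_P$, and a change of variables yields
\[
\sum_{P\ni e}\bar f_P(\theta-c_P)
\;=\;\sum_{P\ni e}\int_{\lfloor\theta\rfloor-c_P}^{\lfloor\theta\rfloor-c_P+1} f_P(\nu)\,d\nu
\;=\;\int_{\lfloor\theta\rfloor}^{\lfloor\theta\rfloor+1}\Bigl(\,\sum_{P\ni e} f_P(\mu-c_P)\Bigr)d\mu .
\]
Because $\theta\in[0,T)$ and $T\in\NN$, we have $\lfloor\theta\rfloor\in\{0,\dots,T-1\}$, so the window $[\lfloor\theta\rfloor,\lfloor\theta\rfloor+1)$ lies inside $[0,T)$; for every $\mu$ in it the original capacity constraint for $e$ and $z$ at time $\mu$ gives $\sum_{P\ni e} f_P(\mu-c_P)\le u_e$, and integrating this bound over a unit interval preserves it. Hence $\sum_{P\ni e}\bar f_P(\theta-c_P)\le u_e$, which is exactly the capacity constraint for $\bar f$; since feasibility in the model means precisely that these constraints hold for all $e,\theta,z$, the averaged solution is feasible.

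It remains to show that the robust value is unchanged. Fix a scenario $z\in\scenarios$ and a path $P$, and let $M:=\max\{0,\,T-\tau(P)-\Delta_z(P)\}$, a nonnegative integer by the integrality observation. Then
\[
\int_0^{M}\bar f_P(\nu)\,d\nu=\sum_{n=0}^{M-1}\int_n^{n+1}\bar f_P(\nu)\,d\nu=\sum_{n=0}^{M-1}\int_n^{n+1} f_P(\nu)\,d\nu=\int_0^{M} f_P(\nu)\,d\nu .
\]
Summing over all $P\in\mathcal{P}$ shows that the throughput of $\bar f$ and of $f$ coincide in every scenario, hence so do their minima over $z\in\scenarios$; that is, $\bar f$ has the same objective function value as $f$. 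I expect the only delicate point to be the feasibility step --- one has to be sure that the unit window over which the capacity inequality is averaged genuinely stays inside the range $[0,T)$ where that inequality is assumed to hold, and this, together with the matching of the integration bound $M$ with the integer grid, is exactly where integrality of $\tau$, $\Delta$ and $T$ is used.
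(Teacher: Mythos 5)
Your proposal is correct and follows essentially the same route as the paper: average each $f_P$ over integer unit intervals, use integrality of $\tau$, $\Delta$ and $T$ to keep the shifted unit windows aligned with the grid, and observe that both the capacity constraints and the per-scenario throughput are preserved. The only cosmetic difference is that you verify feasibility directly by integrating the pointwise capacity inequality over a unit window, whereas the paper phrases the same step as a proof by contradiction.
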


\begin{proof}
 Let $\tilde{f}$ be a solution to the robust maximum flow over time problem.
 For every $s$-$d$-path $P$, we construct the piecewise constant $f_P: \RR \rightarrow \RR_+$ as follows.
 Intuitively, cut $\tilde{f}_P$ into unit-intervals $[a,a+1) \subseteq [0,T)$.
 This is possible due to the fact that $T$ is integral.
 Set the flow rate entering path $P$ during each interval to be the average observed in the unit interval, that is,
 $$f_P(\theta) = \int\limits_{a}^{a+1}{\tilde{f}_P(\nu) d\nu} \text{ for all } \theta \in [a,a+1),$$
 and zero outside of the interval $[0,T)$, see Figures \ref{fig:arbitrary_flow} and \ref{fig:triple_flow} for an illustration.
 This ensures that $f$ and $\tilde{f}$ have the same objective function value since all limits of the intervals are integral.

 We now argue that all constraints are satisfied.
 Suppose there is a scenario $z \in \scenarios$ and a time $\theta \in [a,a+1)$ when $f$ violates the capacity of some edge $e \in E$.
 Then the capacity must be violated for all $\theta \in [a,a+1)$ since the solution consists of piecewise constant functions only, and all travel times $\tau_e$ and $\Delta_e z_e$ are integral.
 Therefore, the solution sends more than $u_e$ through this edge in the unit interval.
 We can conclude that the original solution sends more than $u_e$ through this edge in the unit interval, too.
 So there has to be a time $\theta^* \in [a,a+1)$ for which the capacity is violated in the original solution, which contradicts the feasibility of $\tilde{f}$.
  \qed
\end{proof}

Hence, it is sufficient to consider solutions that can be described by a family of triples $\left\{\left(P^i,f^i, [a^i,b^i)\right)\right\}_{i = 1,\dots,\omega}$ that describe
the rate $f^i$ at which flow is sent into path $P^i$ during the time interval $[a^i,b^i)$, with $a_i,b_i\in \NN$.
We call this interval the \emph{dispatch interval}.
Using Proposition \ref{prop:modelling-techniques:piecewise-constant} we redefine a solution to the robust maximum flow over time problem as follows.

\begin{definition}
\label{def:triples}
A solution to the robust maximum flow over time problem is encoded as a family of triples $\left\{\left(P^i,f^i, [a^i,b^i)\right)\right\}_{i = 1,\dots,\omega}$.
For each $1 \leq i \leq \omega$, $P^i$ is a simple $s$-$d$-path in $G$, $f^i > 0$ and $a^i,b^i \in \{0,\dots,T\}$ with $a^i < b^i$.
\end{definition}

\sectionheadline{Temporally Repeated Flows}
Temporally repeated flows are a classical solution concept used to solve the nominal maximum flow over time problem to optimality.
They are constructed from a path decomposition $x = \sum_{P\in \mathcal{P}} x_P$ of a given static flow $x$ by sending flow at rate $x_P$ along a path $P$ as long as the flow can arrive at the sink $d$ by time $T$.
Therefore, a temporally repeated flow represented in our model has dispatch intervals of the form $[0,  T-\tau(P))$ for a path $P$ and flow rates $f_P = x_P$.
A temporally repeated flow is called \emph{feasible}, if
\begin{align}\label{eq:temprep} \sum_{i : e \in P^i} f^i \leq u_e \text { for all edges } e \in E,\end{align}
that is, the capacity constraints are satisfied independent of the actual point in time.
Note that this will always be the case, if the temporally repeated flow was constructed from a \emph{feasible} static flow.

Equivalently, we can state that a flow $\big\{(P^i, f^i, [a^i, b^i))\big \})$ is a feasible temporally repeated flow if and only if $a^i = 0$ and $b^i \ge T - \tau(P^i)$
for all $i$ and \eqref{eq:temprep} holds.
Since the dispatch intervals are fixed by the paths, we will omit them from now on.
See Figure~\ref{fig:comparision_between_models} for a graphical comparison between the different flow models.

\section{Computational Complexity of the Robust Maximum Flow over Time Problem}
\label{sec:computational-complexity}

\sectionheadline{General Solutions}
The following proposition shows that the robust maximum flow over time problem is at least as hard as the static counterpart.
Recently, Disser and Matuschke \cite{Disser2015newhardnessstatic} disclosed that they were able to prove NP-hardness of the static counterpart for unbounded $\Gamma$.
The result, however, is not yet published.

\begin{restatable}{proposition}{SSRFTNNEnphard}
 \label{prop:modelling-techniques:robust-flow-np-hard}
The robust maximum flow over time problem is at least as hard as the static robust maximum flow problem.
\end{restatable}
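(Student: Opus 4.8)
The plan is to exhibit a polynomial-time, objective-preserving reduction from the static $\Gamma$-robust maximum flow problem (understood with a prescribed path decomposition, in line with the model of Section~\ref{sec:modelling-techniques}) to the robust maximum flow over time problem, realising the intuition from the introduction that zero travel times together with a large delay make ``delaying an edge'' equivalent to ``deleting an edge''. Given a static instance $(G=(V,E), u, s, d, \Gamma)$, I would build the flow-over-time instance on the same digraph $G$, with the same source, sink and parameter $\Gamma$, and set $\tau_e = 0$ and $\Delta_e = 1$ for every $e \in E$, and $T = 1$. Since $T=1$ (or directly by Proposition~\ref{prop:modelling-techniques:piecewise-constant}), a solution of this instance is a family of triples $\{(P^i,f^i,[0,1))\}_i$ with simple $s$-$d$-paths $P^i$, and this is the very same object as a static flow whose path decomposition assigns value $f^i$ to path $P^i$; the correspondence is a bijection in both directions.

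It then remains to check that this bijection preserves feasibility and the robust objective value. For feasibility, fix a scenario $z \in \scenarios$ and an edge $e$: a path $P^i$ through $e$ contributes flow $f^i$ to $e$ at a time $\theta \in [0,1)$ if no edge of $P^i$ preceding $e$ is delayed by $z$, and contributes nothing otherwise, since a delayed predecessor forces the departure time $\theta_{z,e} = \theta - 1 < 0$, where $f$ vanishes. Hence, over all scenarios, the tightest capacity constraint on $e$ is the one of the empty scenario, namely $\sum_{i : e \in P^i} f^i \le u_e$, which is exactly static feasibility together with its path decomposition. For the objective, if the adversary delays a set $Z$ with $|Z| \le \Gamma$, then every path $P^i$ meeting $Z$ has scenario travel time $\tau(P^i) + \Delta_z(P^i) \ge 1 = T$ and therefore contributes $0$ to the robust value, whereas every path avoiding $Z$ contributes its full flow $f^i$; thus the robust value of the flow over time equals $\min_{|Z| \le \Gamma} \sum_{i : P^i \cap Z = \emptyset} f^i$, which is precisely the static robust value. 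Optimal solutions correspond on the nose, the construction is clearly polynomial, and so the static problem reduces to ours; any hardness of the static version (in particular NP-hardness for unbounded $\Gamma$) transfers.

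I expect the only delicate point to be the feasibility step: one must argue that among all $\Gamma$-delay scenarios the binding capacity constraint is the nominal one, so that the temporal dimension neither creates nor destroys feasibility, and one must choose the (necessarily finite, since $\Delta_e \in \NN$) delay value carefully relative to $T$. With $\tau_e = 0$, $T = 1$, $\Delta_e = 1$ lies exactly on the boundary and still works, because the robust value integrates over the half-open interval $[0,\, T - \tau(P) - \Delta_z(P))$, which degenerates to the empty interval for a delayed path. Everything else is bookkeeping; no deeper combinatorial argument is needed beyond making the two equivalences above precise.
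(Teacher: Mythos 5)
Your proposal is correct and follows essentially the same reduction as the paper: the same graph with $\tau \equiv 0$ and $T = 1$, so that any delayed path misses the deadline and the robust over-time value collapses to the static robust value $\min_{|Z|\le\Gamma}\sum_{i:P^i\cap Z=\emptyset} f^i$. Your two deviations are both harmless and arguably tidier: you take $\Delta_e = 1$ rather than the paper's $\Delta \equiv \infty$ (which sits more comfortably with the requirement $\Delta_e \in \NN$), and you make explicit the feasibility check that the paper leaves implicit, namely that under any nonempty scenario a delayed predecessor pushes $\theta_{z,e}$ below $0$ so the nominal capacity constraint is the binding one.
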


\begin{proof}
 We will show that the robust maximum flow over time problem can be
 used to solve the static robust maximum flow problem.

 Let us assume an instance $(G = (V,E),s,d,u,\Gamma)$ of robust maximum flow is given.
 We construct an instance $(G = (V,E),s,d,u,\tau,\Delta,T,\Gamma)$ of robust maximum flow over time as follows.
 Let $T = 1, \tau \equiv 0$ and $\Delta \equiv \infty$.
 Then any feasible solution to the robust maximum flow problem can be mapped to a solution $\left\{\left(P^i,f^i, [a^i,b^i)\right)\right\}_{i = 1,\dots,\omega}$ of the robust maximum flow over time instance.
 For every path $P$ with flow rate $f_P$ in the robust maximum flow
 solution, the over-time solution sends the same amount of flow during the dispatch interval $[0,1)$.
 The worst-case scenario of the robust maximum flow destroys at most $\Gamma$ edges and decreases the robust flow value - with respect to the nominal flow value - by the total flow rate summed among all
 paths that are affected by the set of deleted edges.
 The same interference occurs in the constructed instance of robust maximum flow over time.
 If an edge was delayed, the travel time increases to an arbitrarily large value.
 Hence, any path using such an edge does not reach the destination.
 Thus, the objective function values coincide and an optimal solution to the robust maximum flow over time instance is also optimal for the robust maximum flow problem. \qed
\end{proof}

The following two results show that the temporal component introduces additional difficulty to the robust problem, as the static counterparts are both polynomially solvable.
First, we show that computing an optimal integral solution in polynomial time is unlikely and that the problem cannot be approximated.

\begin{proposition}
\label{prp:SSRFTNE-integral-np-hard-inapproximable}
 The robust maximum flow over time problem is NP-hard, if we require integral flow rates, that is, $f^i \in \ZZ$, even if $\Gamma = 1$.
 Moreover, the problem is inapproximable within any factor.
\end{proposition}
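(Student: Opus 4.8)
The plan is to reduce from the two edge-disjoint paths problem in directed graphs, which is NP-complete (Fortune, Hopcroft, Wyllie). An instance consists of a digraph $H$ together with two source-sink pairs $(s_1,d_1)$ and $(s_2,d_2)$, and we must decide whether $H$ contains two edge-disjoint paths, one from $s_1$ to $d_1$ and one from $s_2$ to $d_2$. From such an instance I would build an instance of the integral robust maximum flow over time problem with $\Gamma = 1$ in which a throughput of $2$ is achievable if and only if the two edge-disjoint paths exist, while otherwise the robust value is at most $1$ (in fact, I would arrange that the only alternatives are ``value $2$'' or ``value $0$'', which simultaneously yields inapproximability within any finite factor). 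Since the flow rates are forced to be integral and all capacities will be set to $1$, any integral solution with positive value must consist of unit-rate flow dispatched along individual simple $s$-$d$-paths.

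First I would attach a common super-source $s$ with two unit-capacity edges $s \to s_1$ and $s \to s_2$, and a common super-sink $d$ with unit-capacity edges $d_1 \to d$ and $d_2 \to d$, and set all travel times to $0$ except on two ``gadget'' edges and set all delays $\Delta_e$ to $\infty$. The time horizon would be taken just large enough (a small constant, e.g.\ $T = 2$ or $3$) that a single particle can traverse any simple path, but the key point is the capacity side, not the objective side: because $\Gamma = 1$ and $\Delta \equiv \infty$, the adversary can completely destroy any one edge, exactly as in the static robust flow reduction (Proposition~\ref{prop:modelling-techniques:robust-flow-np-hard}). To get value $2$ robustly, the flow must route one unit through the $s_1$--$d_1$ part and one unit through the $s_2$--$d_2$ part along paths that do \emph{not} share any edge other than possibly the super-source/super-sink edges — otherwise deleting a shared edge kills both units at once and the robust value drops to $0$. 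Conversely, if edge-disjoint $s_1$--$d_1$ and $s_2$--$d_2$ paths exist in $H$, sending one unit along each (extended through the super-source/super-sink) gives a feasible integral solution whose value is $2$ in every scenario, since deleting any single edge destroys at most one of the two units.

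The main obstacle is making the ``only $0$ or $2$'' dichotomy airtight, i.e.\ ruling out the degenerate option of robustly shipping a single unit of value $1$ (e.g.\ one unit routed from $s_1$ to $d_1$ alone, which survives as long as the adversary does not hit that path — but the adversary \emph{can} hit it, so its robust value is $0$; the subtlety is ensuring the flow cannot split a unit or use the time dimension to hedge). This is where the integrality of flow rates and unit capacities does the work: a solution of robust value $\ge 1$ would need, for every single-edge scenario, at least one surviving unit-rate path reaching $d$ within time $T$; with only two capacity-$1$ ``entry'' edges $s\to s_1$, $s\to s_2$ and the $\infty$ delays, I would argue a short case analysis showing that hedging against all single-edge failures forces the presence of two edge-disjoint paths of the required endpoints. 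I would also double-check that the dispatch-interval and path-simplicity conventions of Definition~\ref{def:triples} do not give extra freedom (they do not, since $T$ is a small constant and any useful path is simple). Assembling these observations, feasibility of robust value $2$ is equivalent to the two edge-disjoint paths instance being a yes-instance, giving NP-hardness for $\Gamma=1$ with integral rates, and the gap between value $2$ (yes) and value $0$ (no) rules out any multiplicative approximation.
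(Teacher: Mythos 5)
Your overall strategy coincides with the paper's: both reduce from the directed two edge-disjoint paths problem via a super-source and super-sink with unit capacities and (effectively) infinite delays, and both derive inapproximability from a zero-versus-positive gap. However, there are two problems with your write-up, one of which is the actual crux of the proof.

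First, a numerical slip: in a YES-instance, sending one unit along each of the two edge-disjoint paths does \emph{not} give ``value $2$ in every scenario.'' With $\Gamma = 1$ and $\Delta \equiv \infty$ the adversary always delays one edge of one path and destroys that unit, so the robust value of this solution is $1$, not $2$ (your own clause ``deleting any single edge destroys at most one of the two units'' already contradicts the claimed value $2$). The correct dichotomy is robust value $1$ (YES) versus $0$ (NO), which still suffices for inapproximability.

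Second, and more seriously, the step you defer as ``a short case analysis'' is exactly where the reduction can fail, and your construction as described does not close it. A NO-instance of two edge-disjoint paths (which asks specifically for an $s_1$--$d_1$ path and an $s_2$--$d_2$ path) may still contain an $s_2$-to-$d_1$ path that is edge-disjoint from, say, an $s_1$-to-$d_2$ path; with all travel times zero, routing one unit along each of these would survive every single-edge deletion and give positive robust value on a NO-instance, breaking the reduction. You mention ``two gadget edges'' with nonzero travel times but never specify them or their purpose. The paper's fix is precisely to set $\tau_{(s,s_2)} = \tau_{(d_1,d)} = 1$ and $T = 2$, so that any $s$--$d$ path entering via $s_2$ and leaving via $d_1$ has travel time $2 = T$ and can never deliver flow; the remaining path types ($s_1$-to-$d_1$, $s_1$-to-$d_2$, $s_2$-to-$d_2$) are then handled by observing that any two of them other than the pair $\{s_1\text{--}d_1,\, s_2\text{--}d_2\}$ share one of the four auxiliary edges, which the adversary can delay. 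Without pinning down this gadget and carrying out that case analysis, the NO-direction of your reduction is not established.
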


\begin{proof}
 We will show that robust maximum flow over time with integral flow rates solves the two edge-disjoint paths problem.
 An instance of two edge-disjoint paths consists of a graph $G = (V,E)$ with two designated pairs of vertices $(s_i,d_i), i = 1,2$ and asks for two edge-disjoint $s_i$-$d_i$-paths in $G$.
 We assume that $s_1,s_2,d_1,d_2$ are pairwise disjoint.
 The problem was shown to be NP-hard by Fortune et al.\ \cite{fortune1980directed}.

 We model this task as a robust maximum flow over time problem as follows (see Figure \ref{fig:SSRFTNE-integral-np-hard}).
 We introduce two auxiliary vertices $s$ and $d$ and connect $s$ to $s_i$ and $d$ to $d_i$.
 Hence, we end up with the graph $\bar{G} = (V \cup \{s,d\}, E \cup \{s,s_1\} \cup \{s,s_2\} \cup \{d_1,d\} \cup \{d_2,d\})$.
 Let $u \equiv 1, \Delta \equiv 2, \tau_e = 0$ for all $e \in E$, $\tau_{\{s,s_1\}} = 0, \tau_{\{s,s_2\}} = 1, \tau_{\{d_1,d\}} = 1, \tau_{\{d_2, d\}} = 0$ and $T = 2$.

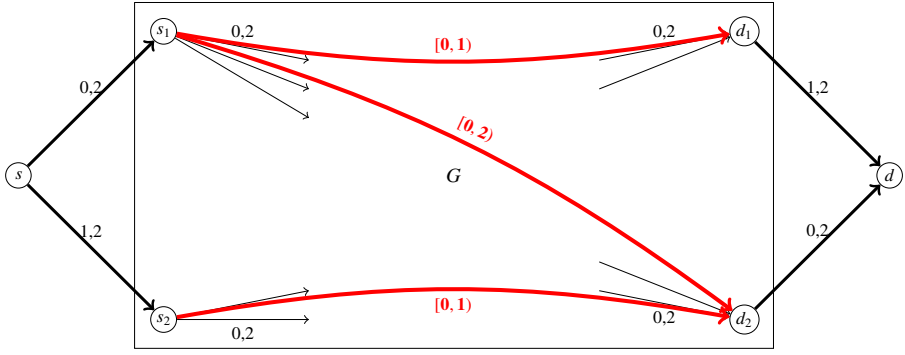
\begin{figure}[tb]
 \centering
 \tikzstyle{vertex}=[circle, draw,
                        inner sep=1pt, minimum width=10pt]
\tikzstyle{edge} = [draw,very thick]

    \resizebox{\textwidth}{!}{
    \begin{tikzpicture}[scale = 2]

    \node[vertex] (s) at (0, 1){\scriptsize $s$};
    \node[vertex] (s1) at (1, 2){\scriptsize $s_1$};
    \node[vertex] (s2) at (1, 0){\scriptsize $s_2$};

    \node[vertex] (d) at (6, 1){\scriptsize $d$};
    \node[vertex] (d1) at (5, 2){\scriptsize $d_1$};
    \node[vertex] (d2) at (5, 0){\scriptsize $d_2$};

    \draw (s) edge[->, very thick] node[anchor = south] {\scriptsize 0,2}  (s1);
    \draw (s) edge[->, very thick] node[anchor = south] {\scriptsize 1,2}  (s2);

    \draw (d1) edge[->, very thick] node[anchor = south] {\scriptsize 1,2}  (d);
    \draw (d2) edge[->, very thick] node[anchor = south] {\scriptsize 0,2}  (d);

    \draw (s1) edge[->] node[anchor = south] {\scriptsize 0,2} (2,1.8);
    \draw (s1) edge[->] (2,1.6);
    \draw (s1) edge[->] (2,1.4);

    \draw (s2) edge[->] node[anchor = north] {\scriptsize 0,2} (2,0);
    \draw (s2) edge[->] (2,0.2);

    \draw (4,0.2) edge[->] node[anchor = north] {\scriptsize 0,2} (d2);
    \draw (4,0.4) edge[->] (d2);

    \draw (4,1.8) edge[->] node[anchor = south] {\scriptsize 0,2} (d1);
    \draw (4,1.6) edge[->] (d1);

    \draw (0.8,2.2) -- (5.2,2.2) -- (5.2,-0.2) -- (0.8,-0.2) -- (0.8,2.2);
    \node at (3,1) {$G$};

    \draw (s1) edge[->,draw=red,ultra thick,bend right=10] node[anchor=south] {\scriptsize \color{red}{$\mathbf{[0,1)}$}} (d1);
    \draw (s1) edge[->,draw=red,ultra thick,bend left=10] node[anchor=south,rotate=-25] {\scriptsize \color{red}{$\mathbf{[0,2)}$}} (d2);

    \draw (s2) edge[->,draw=red,ultra thick,bend left=10] node[anchor=north] {\scriptsize \color{red}{$\mathbf{[0,1)}$}}  (d2);

    \end{tikzpicture}}
    \caption{Construction for Proposition \ref{prp:SSRFTNE-integral-np-hard-inapproximable}.
Edge labels denote the travel times and possible delays, respectively.
The thick edges depict the possible paths and time intervals a feasible solution may use.}
    \label{fig:SSRFTNE-integral-np-hard}
\end{figure}
 If we solve the robust maximum $s$-$d$-flow over time problem, the travel times ensure that only the following path types can contribute to the objective function value of any feasible solution:
 \begin{enumerate}[1)]
  \item  $s$-$d$-paths traversing $s_1$ and $d_1$,
  \item $s$-$d$-paths traversing $s_1$ and $d_2$,
  \item $s$-$d$-paths traversing $s_2$ and $d_2$.
  \end{enumerate}
 Let us assume that there are two edge-disjoint paths $P_1$ and $P_2$ in $G$, that is, a YES-instance.
 Then we can construct a solution to robust maximum flow over time which sends flow along ${s,s_1},P_1,{d_1,d}$ in the interval $[0,1)$ and along ${s,s_2},P_2,{d_2,d}$ in the interval $[0,1)$.
 Since both paths are disjoint, the total amount of flow reaching the destination is two.
 Moreover, no scenario can destroy more than one of the two paths, hence, the robust flow value is equal to one.

 Now, let us assume a NO-instance of two edge-disjoint paths.
 Due to the integrality of the dispatch intervals and flow rates, a solution to robust maximum flow over time in this network can only consist of at most two different paths.
 If it uses a path of type 1) and a path of type 3), both paths have to share some common edge $e^{*} \in E$
 as we assumed to have a NO-instance of two edge-disjoint paths.
 The scenario delaying exactly this edge reduces the flow value to zero.
 In all other cases, the worst case scenario may delay the single utilized edge leaving $s$ (resp.\ entering $t$) in order to decrease the robust flow value to zero.

 Hence, the instance is a NO-instance if and only if the robust flow value is zero.
 Note that in the construction used above, the objective function value is zero for a NO-instance and one for a YES-instance.
 Hence, any approximation algorithm would still distinguish between YES- and NO-instances. \qed
\end{proof}

The following theorem shows that it is strongly NP-hard to verify feasibility
of robust flow over time solutions in general.

\begin{restatable}{theorem}{thmmodellingtechniquesseparationnphard}
\label{thm:modelling-techniques:separation-np-hard}
Deciding feasibility of a given solution $f = \left\{\left(P^i,f^i,
    [a^i,b^i)\right)\right\}_{i = 1,\dots,\omega}$ is NP-hard.
\end{restatable}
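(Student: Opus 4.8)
The plan is to reduce from \textsc{Clique}: given a graph $H=(W,F)$ with $n:=|W|$ and an integer $\kappa$ (which we may assume satisfies $\kappa\ge4$, as smaller $\kappa$ are trivial), decide whether $H$ has a clique of size $\kappa$. From $(H,\kappa)$ I would build, in polynomial time, a robust flow over time instance together with a single candidate solution $f$ so that $f$ is \emph{infeasible} if and only if $H$ has a clique of size $\kappa$. For the construction, fix integer weights $\gamma_w:=N^w$ ($w\in W$), where $N$ is a polynomially bounded constant chosen large enough to make a base-$N$ uniqueness argument work (any $N>6n^2$ is ample). Introduce a bottleneck edge $e^\ast$ with capacity $u_{e^\ast}=\kappa-1$ and route one simple $s$-$d$-path $Q_w$ per vertex $w$ through $e^\ast$. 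For every $f=\{i,j\}\in F$ create a \emph{delay edge} $e_f$ with $\tau_{e_f}=0$ and $\Delta_{e_f}=\gamma_i+\gamma_j$, and let both $Q_i$ and $Q_j$ traverse $e_f$ in the portion of their path before $e^\ast$; moreover let $Q_w$ start with a private edge of travel time $b_w:=K-(\kappa-2)\gamma_w$, with $K$ large enough that all $b_w\ge0$. Give every edge other than $e^\ast$ capacity $n$, set $\Gamma:=\binom{\kappa}{2}$ and $T:=K+\sum_{w}\gamma_w+2$, and let $f:=\{(Q_w,1,[0,1))\}_{w\in W}$, which is a valid solution per Definition~\ref{def:triples}. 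A routine but necessary step is to realise the host graph as an honest layered digraph whose simple $s$-$d$-paths include exactly the $Q_w$ with exactly these edge incidences (inserting private zero-length connector edges between the delay edges of a path); because it is layered and nothing ever waits, strict flow conservation holds under every scenario automatically.

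If $S$ is a $\kappa$-clique, consider the scenario $z:=\{e_f:f\subseteq S\}$, which delays exactly $\binom{\kappa}{2}=\Gamma$ edges. For $w\in S$ the prefix of $Q_w$ then accumulates delay $\sum_{v\in S\setminus\{w\}}(\gamma_w+\gamma_v)=(\kappa-2)\gamma_w+\sum_{v\in S}\gamma_v$, so the flow dispatched on $Q_w$ during $[0,1)$ occupies $e^\ast$ during the \emph{same} unit interval $[K+\sum_{v\in S}\gamma_v,\ K+\sum_{v\in S}\gamma_v+1)\subseteq[0,T)$. Hence $\kappa>\kappa-1=u_{e^\ast}$ units of flow sit on $e^\ast$ simultaneously and $f$ is infeasible.

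Conversely, suppose $f$ is infeasible, so some scenario $z$ with $|z|\le\binom{\kappa}{2}$ violates a capacity. Since each $e_f$ is used by only the two paths $Q_i,Q_j$ (carrying total rate at most $2\le n$) and every edge other than $e^\ast$ has capacity $n$, the violation must occur on $e^\ast$: there is a set $A\subseteq W$ with $|A|\ge\kappa$ such that all $Q_w$, $w\in A$, reach $e^\ast$ at the same (integral) time under $z$. Writing this equality out and subtracting the equations for two vertices of $A$ gives a vanishing integer linear combination of the $\gamma_v=N^v$ whose coefficients are bounded in absolute value by $6n^2<N$; by uniqueness of the balanced base-$N$ representation all coefficients vanish. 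Reading off the coefficient of $\gamma_w$ shows that each $w\in A$ satisfies $\deg_z(w)-\kappa+2=[w\sim_z w']$ for every other $w'\in A$; since the left-hand side does not depend on $w'$, every $w\in A$ is either $z$-adjacent to all of $A\setminus\{w\}$ (with $z$-degree $\kappa-1$) or to none of them (with $z$-degree $\kappa-2$), and these two alternatives cannot coexist for different vertices of $A$. In the second alternative $z$ contains at least $|A|(\kappa-2)\ge\kappa(\kappa-2)$ distinct edges, contradicting $|z|\le\binom{\kappa}{2}$ since $\kappa(\kappa-2)>\binom{\kappa}{2}$ for $\kappa\ge4$; hence the first alternative holds and $A$ induces a clique of size $\ge\kappa$ in $z\subseteq H$.

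I expect the main obstacle to be exactly this converse direction: ruling out ``false-positive'' scenarios that synchronise $\ge\kappa$ of the vertex-paths on $e^\ast$ without $H$ having a clique. Choosing the delays $\Delta_{e_f}=\gamma_i+\gamma_j$ and the offsets $b_w=K-(\kappa-2)\gamma_w$ so that the clique scenario produces a perfect collision while the base-$N$ argument forces every other synchronising scenario to be clique-like is the delicate point, and it is the budget $|z|\le\binom{\kappa}{2}$ together with the harmless restriction $\kappa\ge4$ that eliminates the remaining ``star-like'' spurious case. The secondary technicalities — realising the host graph as a genuine simple-path-decomposable layered network and checking that all numbers have polynomially many bits ($\gamma_w\le N^n$, and $b_w,\Delta_{e_f},T\le N^{O(n)}$) — are routine.
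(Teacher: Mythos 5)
Your reduction is correct, and it shares the paper's skeleton --- reduce from \textsc{clique} by arranging that some admissible scenario synchronises more than $u_{e^*}$ unit-rate paths on a single bottleneck edge if and only if the input graph has the desired clique --- but the gadgetry is dual to the paper's. In the paper's construction the candidate solution has one path per \emph{edge} of $\bar G$, the adversary delays gadgets corresponding to \emph{vertices} with the small budget $\Gamma=r$, and a path lands in the collision slot iff both of its endpoints are delayed; the converse is then a one-line counting argument ($\binom{r}{2}$ doubly-delayed paths out of only $r$ delayed vertex-gadgets force a clique), and the powers of two $2^i$ are only needed to rule out accidental collisions at earlier times. You instead use one path per \emph{vertex}, delay gadgets per \emph{edge} with budget $\Gamma=\binom{\kappa}{2}$, and the offsets $b_w=K-(\kappa-2)\gamma_w$ with $\gamma_w=N^w$ so that a clique produces a perfect collision; your converse is correspondingly heavier, needing the base-$N$ digit analysis plus the $\kappa(\kappa-2)>\binom{\kappa}{2}$ count to eliminate the star-like spurious scenarios. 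I checked that analysis and it is sound (the coefficient bound $n+1<N$ suffices, the two alternatives indeed cannot mix, and the collision interval lies inside $[0,T)$); the graph realisation via shared delay-edge gadgets $(a_f,b_f)$ with private connectors, traversed in a fixed (say lexicographic) order so the digraph is acyclic, also works, and since the model only checks capacities on the given paths, extraneous $s$-$d$-paths in the host graph are harmless. Net effect: both proofs are valid; the paper's choice of roles buys a much simpler converse, while yours concentrates the difficulty in the arithmetic rather than in the budget accounting.
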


\begin{proof}
We provide a polynomial-time reduction from the \textsc{clique} decision problem, which is one of Karp's classical NP-hard problems \cite{karp1972reducibility}.
We will show that, given any graph $\bar{G}=(\bar{V},\bar{E})$ and some $r \in \NN$, we can construct a maximum robust flow over time instance and a corresponding solution $f$ such that the following holds.
There is a clique of size $r$ in $\bar{G}$ if and only if $f$ is an infeasible solution for the maximum robust flow over time instance.
Without loss of generality, we assume \ $|\bar{E}| \ge |\bar{V}|$, $r\geq 3$ and that the vertices in $\bar{V}$ are numbered from $1, \ldots, n$. Let $m = |\bar{E}|$.

We construct a multigraph $G = (V,E)$, with vertices $V = \{s,d_0,d_1\} \cup \{ v_i^\ell, v_i^r : i \in \bar{V} \}$ as illustrated in Figure \ref{fig:modelling-techniques:separation-np-hard} for the example of $\bar{G}=K_3$.
$V$ consists of two vertices $v_i^\ell$ and $v_i^r$ for each vertex $i \in \bar{V}$, together with three vertices $s$, $d_0$ and $d_1$.
The edge set $E$ consists of five types of edges (omitted values for edges $e \in E$ are $u_e = \infty, \tau_e = 0$ and $\Delta_e = 0$):
\begin{enumerate}[1)]
\item $e = (v_i^\ell,v_i^r)$ for each vertex $i \in \bar{V}$ with $\Delta_e = 2^{i}$,
\item $(v_i^r,d_0)$ for each vertex $i \in \bar{V}$,
\item all ``backward'' edges $(v_i^r,v_j^\ell)$ for $i \neq j\in \bar{V}$,
\item $e = (d_0,d_1)$ with $u_e = \smash{\binom{r}{2}} - 1$,
\item $e = (s,v_i^\ell)$ for each edge $\bar{e} \in \bar{E}$ such that $i \in \bar{e}$.
For $\bar{e} = \{i, j\}$, set  $\tau_e = 2^{m+1} - 2^{i} - 2^{j}$. These edges can be parallel.
\end{enumerate}

We set $T = 2^{m+1} + 1$ and $\Gamma = r$. The solution candidate for the feasibility problem is constructed as follows.
We introduce a triple $(P^{\bar{e}}, f^{\bar{e}}, [0,1))$ for each edge $\{i, j\} = \bar{e} \in \bar{E}$,
where $P^{\bar{e}} = (s, v_i^\ell, v_i^r, v_j^\ell, v_j^r, d_0, d_1)$.
The edge $(s, v_i^\ell)$ is chosen to be the designated edge for $\bar{e}$ of type (5) above.

The following two claims conclude the proof as we will show that a $r$-clique in $\bar{G}$ exists if and only if the constructed solution is infeasible due to a capacity violation at time $t = 2^{m+1}$.

\begin{figure}[tb]
 \centering
 \tikzstyle{vertex}=[circle, draw,
                        inner sep=1pt, minimum width=13pt,fill=white]
\tikzstyle{edge} = [draw,very thick]
    \resizebox{\textwidth}{!}{
    \begin{tikzpicture}[scale = 2]
    \begin{scope}[xshift=-2cm,yshift=0cm]
    \node[vertex] (v1) at (0, 0){\scriptsize $v_1$};
    \node[vertex] (v2) at (1.6, 0){\scriptsize $v_2$};
    \node[vertex] (v3) at (0.8, 1){\scriptsize $v_3$};

    \draw (v1) edge[-] node[anchor=north] {$e_1$} (v2);
    \draw (v1) edge[-] node[anchor=south,xshift={-.2cm}] {$e_2$} (v3);
    \draw (v2) edge[-] node[anchor=south,xshift={.2cm}] {$e_3$} (v3);
    \end{scope}
    \draw (-0.2,-0.15) edge[-,thick] (-0.2,1.15);

%    \begin{scope}[]
    \node[vertex] (s) at (0, .5){\scriptsize $s$};
    \node[vertex] (d0) at (3, .5){\scriptsize $d_0$};
    \node[vertex] (d1) at (4, .5){\scriptsize $d_1$};

    \node[vertex] (v1l) at (1, 0){\scriptsize $v^{\ell}_1$};
    \node[vertex] (v1r) at (2, 0){\scriptsize $v^r_1$};

    \node[vertex] (v2l) at (1, 0.5){\scriptsize $v^{\ell}_2$};
    \node[vertex] (v2r) at (2, 0.5){\scriptsize $v^r_2$};

    \node[vertex] (v3l) at (1, 1){\scriptsize $v^{\ell}_3$};
    \node[vertex] (v3r) at (2, 1){\scriptsize $v^r_3$};

    \draw (v1l) to node[anchor=south] {\scriptsize $\Delta = 2$} (v1r);
    \draw (v2l) to node[anchor=south] {\scriptsize $\Delta = 4$} (v2r);
    \draw (v3l) to node[anchor=south] {\scriptsize $\Delta = 8$} (v3r);

    \draw (d0) to node[anchor=south,yshift={-.05 cm}] {\scriptsize $u = 2$} (d1);

        \draw[draw=red] (s) to[bend right=15] node[anchor=north,rotate=-30] {\scriptsize \color{red}{$P_1$}, $\tau=10$} (v1l);
         \draw[draw=red] (v1l) to[bend right=15] (v1r);
         \draw[draw=red] (v1r) to[bend right=5] (v2l);
         \draw[draw=red] (v2l) to[bend right=15] (v2r);
         \draw[draw=red] (v2r) to[bend right=10] (d0);
         \draw[draw=red] (d0) to[bend right=40] (d1);

    \draw[draw=darkgreen,dashed] (s) to node[anchor=south, rotate=25] {\scriptsize \color{darkgreen}{$P_3$}, $\tau=4$} (v3l);
    \draw[draw=darkgreen,dashed] (v3l) to[bend right=15] (v3r);
    \draw[draw=darkgreen,dashed] (v3r) to[bend right=5] (v2l);
    \draw[draw=darkgreen,dashed] (v2l) to[bend left=30] (v2r);
    \draw[draw=darkgreen,dashed] (v2r) to[bend left=10] (d0);
    \draw[draw=darkgreen,dashed] (d0) to[bend right=15] (d1);
    \draw[draw=blue,dotted,thick] (s) to[bend left=10] node[anchor=south,rotate=-30] {\scriptsize \color{blue}{$P_2$}, $\tau=6$} (v1l);
    \draw[draw=blue,dotted,thick] (v1l) to[bend left=30] (v1r);
    \draw[draw=blue,dotted,thick] (v1r) to[bend left=15] (v3l);
    \draw[draw=blue,dotted,thick] (v3l) to[bend left=55] (v3r);
    \draw[draw=blue,dotted,thick] (v3r) to (d0);
    \draw[draw=blue,dotted,thick] (d0) to[bend left=45] (d1);

    \end{tikzpicture}}
\caption{Construction from Theorem \ref{thm:modelling-techniques:separation-np-hard} for the example of $\bar{G} = K_3$.
The graph $\bar{G}$ is on the left and the corresponding robust flow over time instance with $T=2^{m+1}+1=17$ and $\Gamma = r$ is on the right hand side.}
\label{fig:modelling-techniques:separation-np-hard}
\end{figure}
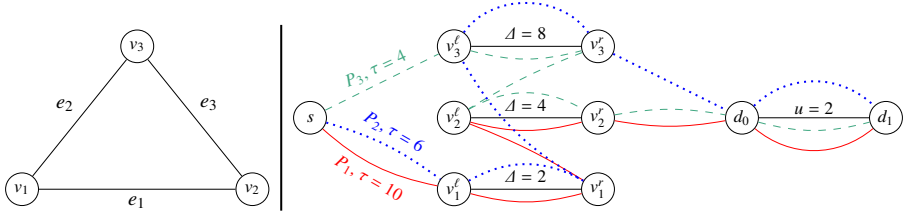

\omsClaim{The constructed solution obeys the capacity constraints for every point in time $t < 2^{m+1}$.}
For the proof, observe that only the edge $(d_0,d_1)$ has finite capacity and that only edges $(v_i^{\ell}, v_i^r)$ can be delayed.
Hence, it is sufficient to argue that its capacity cannot be exceeded unless $t = 2^{m+1}$.
For any point in time $t < 2^{m+1}$, a path can only contribute to the capacity violation if it was delayed at most once.
If it had been delayed twice, it would have had a total travel time of $2^{m+1}-2^i-2^j+2^i+2^j = 2^{m+1}$ by construction.

Now, let us consider some edge $(v_i^\ell,v_i^r)$ and assume that it was delayed.
Then all paths which cross that edge and have no other delayed edge, will have a total travel time of $2^{m+1} - 2^j$ for some $j \neq i$.
By construction, at most $\Gamma = r$ many edges are delayed.
We conclude that at most $r$ paths can have the same travel time, which in particular, are strictly fewer than $\binom{r}{2}$ paths for $r \geq 3$.

We now observe that for $\{i, j\} \not = \{i', j'\}$, $2^i + 2^j \not = 2^{i'} + 2^{j'}$ and that $2^i \not = 2^{i'} + 2^{j'}$ for all $i, i', j'$.
Therefore, less than  $\binom{r}{2}$ paths can arrive at $d_0$ at any point in time before $2^{m+1}$  and thus, the capacity cannot be violated.

\omsClaim{There is a scenario in which the solution violates the capacity constraint for $t = 2^{m+1}$ and edge $e = (d_0,d_1)$ if and only if there is a clique of size $r$ in $\bar{G}$.}
In order to prove the claim, we will show that a clique $C$ of size $r$ in $\bar{G}$ proves that the solution $f$ is infeasible.
Let us select a scenario $z \in \scenarios$ with $z_{(v_i^\ell,v_i^r)} = 1$ if and only if $i \in C$.
Now, for each path $P^{\bar{e}}$ there are three possible situations:
\begin{enumerate}
\item The path contains no delayed edge, then $\tau(P^{\bar{e}}) = 2^{m+1} - 2^u - 2^v < 2^{m+1}$.
  Hence, it does not contribute to the capacity of edge $(d_0,d_1)$ at time $2^{m+1}$.
\item If the path is delayed exactly once, the same argument holds and $\tau(P^{\bar{e}}) < 2^{m+1}$.
\item Finally, if the path is delayed exactly twice, $\tau(P^{\bar{e}}) = 2^{m+1}$ and the path contributes to the capacity.
\end{enumerate}
Consequently, a path $P^{\bar{e}}$ contributes to the capacity violation at edge $(d_0,d_1)$ if and only if it was delayed twice.
Due to the construction of $z$, this is the case if and only if both endpoints of $\bar{e}$ and thus $\bar{e}$ itself was part of the clique.
Since the clique consists of $\binom{r}{2}$ edges, the edge capacity is violated.

On the other hand, if the solution is infeasible, there exists a scenario $z \in \scenarios$ such that at least $\binom{r}{2}$ paths are delayed exactly twice.
Since $\Gamma = r$, this is only possible if the delayed paths form a $r$-clique in $\bar{G}$. \qed
\end{proof}

\sectionheadline{Temporally Repeated Flows}
Next, we discuss the computational complexity of temporally repeated flows, which were introduced in Section \ref{sec:modelling-techniques}.
We observe that computing optimal temporally repeated flows is an NP-hard task in general.

\begin{restatable}{proposition}{proptemprepcomplexity}
\label{prop:temp-rep:complexity}
  In general, the problem of computing an optimal robust temporally repeated flow is at least as hard as the static robust maximum flow problem.
\end{restatable}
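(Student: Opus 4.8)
The plan is to reuse the reduction from the proof of Proposition~\ref{prop:modelling-techniques:robust-flow-np-hard} essentially verbatim, and to observe that in the resulting flow-over-time instance the feasible temporally repeated flows already capture everything. Given an instance $(G=(V,E),s,d,u,\Gamma)$ of static robust maximum flow, I would build the flow-over-time instance $(G,s,d,u,\tau,\Delta,T,\Gamma)$ with $T = 1$, $\tau \equiv 0$ and $\Delta \equiv \infty$, exactly as in that proof; this is clearly a polynomial-time construction.

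First I would check that, for this particular instance, restricting to temporally repeated flows is no restriction at all. By Definition~\ref{def:triples} every dispatch interval $[a^i,b^i)$ has $a^i,b^i \in \{0,\dots,T\} = \{0,1\}$ with $a^i < b^i$, hence necessarily $a^i = 0$ and $b^i = 1$. Since $\tau(P^i) = 0$, this is precisely the condition $a^i = 0$ and $b^i \ge T - \tau(P^i)$; together with the fact that the feasibility requirement on the triples (capacities obeyed under every scenario) collapses to $\sum_{i : e \in P^i} f^i \le u_e$ because all travel times are zero, i.e.\ exactly~\eqref{eq:temprep}, this shows that a flow over time in this instance is feasible if and only if it is a feasible temporally repeated flow, and that such a flow corresponds bijectively to a static flow $x$ obtained by sending $f^i$ units along $P^i$ which is feasible for $(G,s,d,u)$.

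Next I would argue that this bijection preserves the robust value. With $\Delta \equiv \infty$, any path $P^i$ that contains an edge delayed by a scenario $z$ has $\tau(P^i) + \Delta_z(P^i) = \infty > T$, so $\int_0^{\max\{0,\,T - \tau(P^i) - \Delta_z(P^i)\}} f^i \, d\nu = 0$; a path avoiding every edge delayed by $z$ contributes $\int_0^1 f^i \, d\nu = f^i$. Hence the robust value equals $\min_{z \in \scenarios}\sum_{i:\, P^i \text{ avoids all edges delayed by } z} f^i$, which is exactly the worst-case throughput surviving the deletion of the at most $\Gamma$ edges chosen by the adversary in the static robust maximum flow problem. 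Consequently an optimal robust temporally repeated flow for the constructed instance yields an optimal static robust flow, proving the proposition. (Equivalently, one can simply note that the solution produced in the proof of Proposition~\ref{prop:modelling-techniques:robust-flow-np-hard} is already temporally repeated.)

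The only point that deserves care, rather than a genuine obstacle, is to pin down what ``static robust maximum flow problem'' refers to: as recalled in the related-work and modelling sections (following Bertsimas et al.\ \cite{bertsimas2013robust}), the robust value of a static flow depends on its path decomposition, so the optimization is taken over path-decomposed flows. This is precisely the form in which a temporally repeated flow is given here, so the correspondence above is exact; I would state this explicitly to keep the reduction airtight.
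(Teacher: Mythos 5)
Your proposal is correct and follows essentially the same route as the paper: reuse the $T=1$, $\tau\equiv 0$, $\Delta\equiv\infty$ construction from Proposition~\ref{prop:modelling-techniques:robust-flow-np-hard} and observe that, since integer dispatch intervals suffice, every dispatch interval is forced to be $[0,1)$, so general solutions and temporally repeated flows coincide on this instance. The extra verification of the robust-value correspondence is already contained in the earlier proof, so no further argument is needed.
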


\begin{proof}
We consider again the construction in the proof of Proposition~\ref{prop:modelling-techniques:robust-flow-np-hard}.
Recall that we had $T = 1$. In Proposition~\ref{prop:modelling-techniques:piecewise-constant}, we showed
that it suffices to consider flows with integer dispatch intervals.
Therefore, the dispatch interval for any path is $[0, 1)$ and there is an optimal flow over time in this network that is a temporally repeated flow. \qed
\end{proof}

There are situations in which an optimal temporally repeated flow can be computed efficiently.
The following Proposition \ref{prop:temp-rep:LP-T-not} considers instances whose longest robust path length does not exceed the time horizon for any possible scenario $z \in \scenarios$.
That is, we say an instance has the \emph{$T$-bounded path length property} if and only if
\[\max_{P \in \paths, z \in \scenarios} \{ \tau(P) + \Delta_z(P) \} \leq T.\]
This ensures that we can write down an LP model for the problem whose pricing problem is tractable. That means, we can solve the dual separation problem efficiently: given a dual solution,
decide whether the solution is feasible and, if not, return a violated dual inequality. This concludes that the LP can also be solved in polynomial time due to \cite{grotschel1981ellipsoid}.

\begin{restatable}{proposition}{proptemprepTnot}
\label{prop:temp-rep:LP-T-not}
 An optimal robust temporally repeated flow can be computed in polynomial time for instances with the $T$-bounded path length property.
\end{restatable}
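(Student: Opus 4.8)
The plan is to show that, under the $T$-bounded path length property, the robust value of a temporally repeated flow is a simple closed-form function of its underlying static flow, and that maximizing this function reduces to a polynomial-size linear program.

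First I would fix a feasible temporally repeated flow $\{(P^i,f^i,[0,T-\tau(P^i)))\}_{i=1,\dots,\omega}$ and let $x_e=\sum_{i:e\in P^i}f^i$ be the associated static $s$-$d$-flow (so $0\le x_e\le u_e$ and flow conservation holds by \eqref{eq:temprep} and because the $P^i$ are $s$-$d$-paths). Since every simple $s$-$d$-path $P$ and every scenario $z\in\scenarios$ satisfy $\tau(P)+\Delta_z(P)\le T$, the truncation $\max\{0,\cdot\}$ in the definition of the robust value is vacuous, so the robust value equals $\min_{z\in\scenarios}\sum_i f^i\,(T-\tau(P^i)-\Delta_z(P^i))$. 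Expanding $\tau(P^i)=\sum_{e\in P^i}\tau_e$ and $\Delta_z(P^i)=\sum_{e\in P^i}\Delta_e z_e$ and swapping summation order (the standard temporally repeated flow identity), this becomes $T\,|x|-\sum_e\tau_e x_e-\max_{z\in\scenarios}\sum_e\Delta_e x_e z_e$, where $|x|$ denotes the value of $x$. The inner maximization is just the choice of the $\Gamma$ edges with largest $\Delta_e x_e$; in particular the robust value depends only on $x$ and not on the chosen path decomposition. Conversely, any feasible static $s$-$d$-flow decomposes in polynomial time into simple $s$-$d$-paths, together with cycles (which, being removable, only increase the expression above) and paths of travel time exactly $T$ (which contribute nothing and can be dropped). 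Hence maximizing the robust value over all feasible temporally repeated flows is the same as maximizing
\[ g(x) \;:=\; T\,|x| - \sum_{e\in E}\tau_e x_e - \max_{z\in\scenarios}\sum_{e\in E}\Delta_e x_e z_e \]
over all feasible static $s$-$d$-flows $x$.

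Next I would linearize the last term via the standard Bertsimas--Sim reformulation \cite{BeSi03}: because all $\Delta_e x_e\ge 0$, LP duality for the polytope $\{z:0\le z\le\mathbf 1,\ \sum_e z_e\le\Gamma\}$ gives $\max_{z\in\scenarios}\sum_e\Delta_e x_e z_e=\min\{\Gamma\theta+\sum_e q_e:\ \theta\ge 0,\ q_e\ge 0,\ \theta+q_e\ge\Delta_e x_e\ \forall e\}$. Substituting and using $\max(A-\min B)=\max(A-B)$, the problem becomes the single linear program: maximize $T\,|x|-\sum_e\tau_e x_e-\Gamma\theta-\sum_e q_e$ over variables $x\in\RR^E$, $\theta\in\RR$, $q\in\RR^E$, subject to flow conservation at all $v\in V\setminus\{s,d\}$, $0\le x_e\le u_e$, $\theta\ge 0$, $q_e\ge 0$, and the coupling constraints $\theta+q_e\ge\Delta_e x_e$ for all $e\in E$. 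This LP has $O(|E|)$ variables and $O(|V|+|E|)$ constraints and is bounded (the objective is at most $T\sum_{e\in\delta^+(s)}u_e$), so it is solvable in polynomial time. Finally I would decompose an optimal $x^\star$ into simple $s$-$d$-paths in polynomial time, discard cycle flow and paths of travel time $T$; the resulting triples $\{(P^i,f^i,[0,T-\tau(P^i)))\}$ form a feasible temporally repeated flow whose robust value equals the LP optimum $g(x^\star)$, hence is optimal.

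The one step that must be handled carefully is the first one: it is precisely the $T$-bounded path length property that removes all the $\max\{0,\cdot\}$ truncations, turning the adversary's damage into a linear functional of $x$, namely the weight of the $\Gamma$ heaviest edges under weights $\Delta_e x_e$; without this the objective would not be concave-maximizable in this clean way. Everything after that is routine. (Alternatively, as the remark before the proposition suggests, one can keep a path-indexed LP with one variable $f_P$ per simple $s$-$d$-path together with the Bertsimas--Sim variables $\theta,q$, and solve it by the ellipsoid method \cite{grotschel1981ellipsoid}: its dual has one inequality $\sum_{e\in P}(y_e+\Delta_e w_e+\tau_e)\ge T$ per path, so the separation problem is a shortest $s$-$d$-path computation with the \emph{nonnegative} weights $y_e+\Delta_e w_e+\tau_e$, which is polynomial and automatically yields a simple path. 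The compact formulation above avoids column generation and is the shorter route.)
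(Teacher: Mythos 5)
Your proof is correct, but it takes a genuinely different route from the paper. Both arguments hinge on the same two consequences of the $T$-bounded path length property --- the coefficients $T-\tau(P)-\Delta_z(P)$ are nonnegative so the truncation disappears, and the adversary's damage $\sum_{P}\Delta_z(P)x_P$ collapses to the edge-additive form $\sum_{e}\Delta_e z_e x_e$, which is then linearized by LP duality over the totally unimodular scenario polytope exactly as in Bertsimas--Sim. The difference is what happens around this core step: the paper keeps one variable $x_P$ per simple $s$-$d$-path, obtains an LP with exponentially many variables, and solves it with the ellipsoid method via a shortest-path separation oracle for the dual (with the nonnegative costs $\alpha_e+\beta_e\Delta_e+\tau_e$); you instead observe that the robust value of a temporally repeated flow is a function of the aggregate edge flow alone, pass to a compact edge-based LP with flow conservation, and recover the paths afterwards by standard flow decomposition (correctly handling cycles, which can only help since $\tau,\Delta\ge 0$, and zero-length dispatch intervals). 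Your compact formulation avoids column generation entirely, is solvable by any polynomial LP algorithm, and immediately yields an optimal temporally repeated flow supported on at most $|E|$ paths --- slightly sharpening the $2|E|$ bound the paper later extracts from its own reformulation. The paper's path formulation, on the other hand, is the one reused (in dual form) for the optimality-gap analysis in Section~\ref{sec:bounds}, which is presumably why it is presented that way. One point worth keeping explicit in your write-up, since the paper itself warns that in the \emph{static} robust flow problem the robust value genuinely depends on the path decomposition: it is precisely the absence of the per-path truncation $\min\{\Delta_z(P),\,T-\tau(P)\}$ under $T$-boundedness that makes the loss additive over edges and hence decomposition-independent here --- you do flag this, and it is the crux.
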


\begin{proof}
For ease of notation, we denote the set of all feasible temporally repeated flows by $\mathcal{F} = \{ x \in \RR_+^{|\mathcal{P}|} : \sum_{P : e \in P} x_P \leq u_e, \forall e \in E \}$.
 We can formulate the problem as follows:
 \begin{align*}
  \max_{x \in \mathcal{F}} \min_{z \in \scenarios} \left\{ \sum_{P \in \paths} \left(T - \tau(P) - \Delta_z(P) \right) x_P \right\}
 \end{align*}
 where, as before, $\scenarios=\{z\in \{0,1\}^E\mid \sum_{e\in E} z_e \le \Gamma\}$, and $\Delta_z(P)=\sum_{e\in P} z_e \Delta_e$.
 The variables $x_P$ denote the flow rate at which flow is sent into
 path $P \in \paths$ and the $z_e$ variables model the decision on
 which edge the travel time increases.
 Note that the additional assumption on the maximal path length makes sure that the objective coefficients of all paths in all scenarios can never be negative.

 Each flow $x\in \mathcal{F}$ induces a load of $x_e=\sum_{P\in \paths:e\in P}x_P$ on each edge $e\in E$.
 Thus, the term $\sum_{P\in \paths} \Delta_z(P) x_P$ can be rewritten as $\sum_{e\in E} \Delta_e z_e x_e$.
 As a consequence we can reformulate the objective function of the problem above as:
 $$\max_{x\in \mathcal{F}}\left( \sum_{P \in \paths} \left(T - \tau(P)\right) x_P - \max_{z\in \scenarios}\sum_{e\in E} \Delta_e z_e x_e \right).$$

 Let us consider the inner problem $\max_{z\in \scenarios}\sum_{e\in E} \Delta_e z_e x_e$ for a fixed flow $x\in \mathcal{F}$.
 Since the linear inequality system $\{ \sum_{e\in E} z_e \le \Gamma,\ 0\le z_e \le 1 \forall e\in E \}$ is totally unimodular,
 we might as well replace $\max\{\sum_{e\in E} \Delta_e z_e x_e : z\in \scenarios\}$ by its linear relaxation
 \begin{align*}
 \max_{z\in [0,1]^{|E|}} \left\{ \sum_{e\in E} \Delta_e z_e x_e: \sum_{e\in E} z_e \le \Gamma \right\}.
 \end{align*}
By strong LP duality (using $\gamma_0$ as dual variable for the GUB constraint and $\gamma_e$ for the upper variable bounds), the objective function value of this inner problem coincides with the objective function value of the dual problem
 \begin{align*}
 \min_{\gamma \in \RR_+^{|E|+1}}\quad & \gamma_0 \Gamma + \sum_{e \in E} \gamma_e \\
 \text{s.t.} \quad & \gamma_0 + \gamma_e \geq \Delta_e \sum_{P\in \paths : e \in P} x_P\quad && \forall e \in E.
 \end{align*}
As a consequence, we can reformulate the problem to compute an optimal temporally repeated flow as:
 \begin{align*}
  \max_{x \in \RR_+^{|\mathcal{P}|},\ \gamma \in \RR_+^{|E|+1}} \quad & \sum_{P \in \paths} x_P (T - \tau(P)) - \gamma_0 \Gamma - \sum_{e \in E} \gamma_e\\
 \text{s.t.} \quad & \sum_{P\in \paths : e \in P} x_P \leq u_e \quad && \forall e \in E \\
 & \gamma_ 0 + \gamma_e \geq \Delta_e \sum_{P\in \paths : e \in P} x_P \quad && \forall e \in E.
 \end{align*}
We denote the dual variables for capacity constraints by $\alpha \in \RR_+^{|E|}$ and the dual variables for scenarios by $\beta \in \RR_+^{|E|}$ and obtain the following dual:
 \begin{align*}
 \min_{\alpha,\ \beta \in \RR_+^{|E|}} \quad & \sum_{e \in E} \alpha_e u_e \\
 \text{s.t.}\quad & \sum_{e \in P} \alpha_e + \sum_{e \in P} \beta_e \Delta_e \geq T - \sum_{e \in P} \tau_e\quad && \forall P \in \mathcal{P} \\
 & \sum_{e \in E} \beta_e \leq \Gamma \\
 & \beta_e \leq 1\quad && \forall e \in E.
 \end{align*}
The pricing problem for path variables $x_P$ corresponds to separating the inequalities $\sum_{e \in P}(\alpha_e + \beta_e \Delta_e) \geq T - \sum_{e \in P} \tau_e$ for all paths $P \in \paths$.
 By moving $\sum_{e \in P} \tau_e$ to the left hand side, it can be solved as a shortest path problem with cost $c_e = \alpha_e + \beta_e \Delta_e + \tau_e$, which can be solved in polynomial time.
 Since the ellipsoid method requires only a polynomial time separation oracle in order to run in polynomial time \cite{grotschel1981ellipsoid}, this concludes that the LP can also be solved in polynomial time.
 \qed
\end{proof}

Note that, in general, checking whether the longest path length exceeds the time horizon is NP-hard.
Still, whenever the time horizon is sufficiently large, the $T$-bounded path property is certainly fulfilled.

\section{Bounds on the Solution Quality of Temporally Repeated Flows}
\label{sec:bounds}

In the remainder of the paper, we turn our focus on the solution quality of temporally repeated flows compared to a general solution.
For any given instance $\mathcal{I}$ of the robust maximum flow over time problem, let $f_{\text{TR}}^{\text{OPT}}$ be the value of an optimal temporally repeated and $f^{\text{OPT}}$ be the value of an optimal general solution.
We call the ratio ${f^{\text{OPT}}}/{f_{\text{TR}}^{\text{OPT}}}$ the \emph{optimality gap} of $\mathcal{I}$ and provide general upper and lower bounds on the optimality gap.

\subsection{Lower Bounds}
The next two propositions provide lower bounds on the worst case optimality gap and show that the class of temporally repeated flows is not necessarily optimal for the robust maximum flow over time problem, even for $\Gamma = 1$.
Proposition~\ref{prop:temp-rep:linear-gap} presents a family of instances with optimality gap $\Omega(T + \Gamma)$.
For instances with $T$-bounded path length, the following Proposition~\ref{prop:temp-rep:log-gap} yields a gap of $\Omega(\log T + \log \Gamma)$.
Both families of instances are depicted in Figure~\ref{fig:temp-rep:gap-instances}.

\begin{restatable}{proposition}{proptempreploggap}
 \label{prop:temp-rep:log-gap}
The optimality gap of temporally repeated flows can be $\mathcal{H}_T$ and $\mathcal{H}_{\Gamma+1}$ for instances satisfying the $T$-bounded path length property.
Here, $\mathcal{H}_r \in \Omega(\log{r})$ is the r-th harmonic number, i.e.\ $\mathcal{H}_r = \sum_{i = 1}^r \frac{1}{i}$.
\end{restatable}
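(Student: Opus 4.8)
The plan is to prove the statement by an explicit construction rather than by a general argument: for each $T$ I would build one instance of the robust maximum flow over time problem that satisfies the $T$-bounded path length property and on which the optimality gap is $\mathcal{H}_T$, and, separately, one instance on which it is $\mathcal{H}_{\Gamma+1}$ (these are the two instance families drawn in Figure~\ref{fig:temp-rep:gap-instances}). For a fixed such instance it then suffices to prove two independent claims, whose ratio gives the bound: first, that one concrete, hand-crafted general solution attains robust value at least $\mathcal{H}_T$ (resp.\ $\mathcal{H}_{\Gamma+1}$); and second, that \emph{every} feasible temporally repeated flow on the instance has robust value at most $1$.

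For the general lower bound I would arrange the instance around a single capacity-one edge incident to $s$ through which every $s$-$d$ path must pass, so that the total flow rate in transit is at most one at all times. Behind this gate sits a family of paths $P_1,\dots,P_T$ (resp.\ $P_1,\dots,P_{\Gamma+1}$) whose private edges carry travel times and potential delays chosen so that the "protected" portion of $P_i$'s dispatch interval $[0,T-\tau(P_i))$ — the launch times for which all flow on $P_i$ still reaches $d$ before the horizon in every admissible $\Gamma$-scenario — shrinks like a $\tfrac1i$ fraction of a common unit as $i$ grows, and so that these protected portions can be made pairwise disjoint by time-shifting. The designated general solution then feeds unit rate into $P_i$, phase by phase, during (a sub-interval of) $P_i$'s protected window. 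Disjointness of the phases keeps the gate capacity satisfied, the choice of travel times keeps all path lengths within $T$ in every scenario (so the $T$-bounded property holds), every unit of this flow survives every scenario, and the delivered amount telescopes to $\sum_i \tfrac1i=\mathcal{H}_T$ (resp.\ $\mathcal{H}_{\Gamma+1}$). Verifying feasibility and robustness of this particular solution is then a direct check against the numerical data of the instance.

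For the temporally repeated upper bound I would argue against an arbitrary feasible temporally repeated flow $x$ by exhibiting a single adversary scenario. The point is that a temporally repeated flow is forced to use the full dispatch interval $[0,T-\tau(P))$ on every path it charges, so all these paths load the gate edge simultaneously; this pins $\sum_{P} x_P\le 1$. Because most of each path's flow is then sent during its \emph{unprotected} range, a suitable $\Gamma$-scenario destroys all but an $O(1)$ amount: making this quantitative amounts to bounding the value of the small optimization problem "maximise, over nonnegative $x$ with $\sum_P x_P\le 1$, the nominal throughput minus the damage inflicted by the best scenario", and one shows this optimum is a constant independent of $T$ and $\Gamma$ (for the stated instances, $1$). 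The main obstacle will be the simultaneous calibration of travel times and delays: the delays must be large enough that this last optimum is genuinely $O(1)$, yet small enough that the staggered general solution of the previous paragraph still lies entirely within protected windows and that no path length ever exceeds $T$. Once that balance is struck, dividing $f^{\text{OPT}}\ge\mathcal{H}_T$ by $f_{\text{TR}}^{\text{OPT}}\le 1$ finishes the proof, and the $\mathcal{H}_{\Gamma+1}$ statement follows from the analogous instance in which it is the adversary's budget $\Gamma$, rather than the horizon, that must be spread over the $\Gamma+1$ relevant paths.
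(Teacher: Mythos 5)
Your overall framing (one explicit instance, a good general solution, an upper bound on every temporally repeated flow) matches the paper, but the mechanism you propose for the general lower bound does not work, and it is genuinely different from what the paper does. The flaw is in the notion of ``pairwise disjoint protected windows.'' The protected portion of a path's dispatch interval --- the set of launch times $\theta$ for which flow on $P$ reaches $d$ by $T$ in \emph{every} scenario --- is always the initial segment $\bigl[0,\; T-\tau(P)-\max_{z}\Delta_z(P)\bigr)$ of $[0,T-\tau(P))$. These segments all contain launch time $0$ whenever they are nonempty, so they can never be made pairwise disjoint by any choice of travel times; and with a capacity-one gate incident to $s$, the total launch rate is at most $1$ at every instant, so the total \emph{protected} flow any feasible solution can ship is at most $\max_P\bigl(T-\tau(P)-\max_z\Delta_z(P)\bigr)$ --- a quantity that a temporally repeated flow putting all its rate on that single maximizing path already achieves. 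Hence a general solution whose robust value is certified entirely by scenario-proof flow can never beat the best temporally repeated flow by more than a constant factor, and your intended ratio $\mathcal{H}_T/O(1)$ cannot materialize this way. Moving the gate downstream so that staggered \emph{arrival} windows replace launch windows does not rescue the plan: to push the temporally repeated value down to $O(1)$ you need $\Gamma$ to be about the number of paths, but then scenarios delaying two or more paths send their delayed flow into overlapping arrival intervals at the unit-capacity gate strictly before time $T$, making your general solution infeasible --- unless you calibrate the delays so that all delayed flow arrives exactly at time $T$ or later, in which case every protected window is empty. (There is also a smaller issue: Definition~\ref{def:triples} requires integer dispatch endpoints, so windows of length $1/i$ cannot be dispatch intervals without rescaling time, which inflates $T$ and ruins the $\mathcal{H}_T$ claim.)

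That degenerate calibration is exactly the paper's construction (Figure~\ref{fig:temp-rep:gap-instances}(a)): $r$ parallel $(s,v)$-edges with $\tau_{e_i}=i$, $\Delta_{e_i}=r-i$, a unit-capacity edge $(v,d)$, $T=r$, $\Gamma=r-1$. No flow particle is protected --- every delayed particle arrives exactly in $[T,T+1)$, past the horizon, which also keeps the solution feasible since capacities are not enforced after $T$. The general solution sends one unit on each of the $r$ paths during $[0,1)$; the distinct travel times tile the gate's arrival interval $[0,T)$ disjointly, and robustness comes purely from the budget: the adversary kills at most $\Gamma=r-1$ of the $r$ units, leaving value $1$. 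The harmonic number then appears on the \emph{temporally repeated} side, not the general side: a temporally repeated flow has $\sum_i x_i\le 1$, the surviving (unattacked) path $P^i$ contributes $x_i\,(T-\tau(P^i))=x_i(i+1)$, and an equalization/averaging argument shows the optimum sets $x_i\propto 1/(i+1)$, giving value $1/\mathcal{H}_r$ and hence gap $\mathcal{H}_r$ with $r=T=\Gamma+1$. You would need to replace your protection-based lower bound with this budget-based one, and add the equalization argument for the upper bound, for the proof to go through.
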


\begin{proof}
 For fixed $r \in \NN$, we construct an instance $I_r$ with graph $G_r = (V,E_r)$ as follows (see Figure \ref{fig:temp-rep:gap-instances:log-gap}).
 The vertex set always consists of three vertices $V = \{s,v,d\}$.
 The edge set consists of a designated edge $e^* = (v,d)$ and a set of $r$ parallel edges $e_i = (s,v), i = 0,\dots,r-1$.
 We set $\tau_{e^*} = \Delta_{e^*} = 0$, $\tau_{e_i} = i, \Delta_{e_i} = r - i, \Gamma = r - 1, T = r$.
 All edges have unit capacity.
 For the sake of notation, we use $P^i$ to denote the unique $s$-$d$-path which contains edge $e_{r-i-1}$.
 A combination of the two claims below yields an optimality gap of $\mathcal{H}_r = \Omega(\log r)$.
 Note that $r = \Gamma + 1 = T$.

\omsClaim{There exists a solution to $I_r$ with objective function value $1$.}
 For each $i = 0,\dots,r-1$, we send one unit of flow along path $P^i$ in the time interval $[0,1)$.
 The flow of each path reaches the designated edge $e^*$ during the time interval $[i, i+1)$, or if the path was delayed, during ${[r,r+1)}$ which exceeds the time horizon.
 Since the time intervals do not overlap during the time horizon, there will be no collisions.
 Furthermore, the total flow sent in the nominal setting is equal to the time horizon $T = r$.
 Since the total flow sent along each path is equal to $1$, any scenario can destroy at most $\Gamma = (r-1)$ units of flow, thus leaving a remaining flow of $1$.

 \omsClaim{An optimal temporally repeated flow can send at most $\mathcal{H}_r^{-1}$ units of flow.}
 Let us assume that $x^*$ is an optimal temporally repeated flow with flow rate $x^*_{P^i}$ on path $P^i$.
 For ease of notation, we use $x^*_i = x^*_{P^i}$ in the remainder of the proof.
 Since the flow is temporally repeated, $\sum x^*_i \leq 1$ holds.
 Without loss of generality, we can assume equality.
 Otherwise, we could increase some variable $x^*_i$ by a sufficiently small positive amount $\epsilon$ without decreasing the robust flow value (Note that any scenario can destroy at most the additional flow that would be sent by increasing the variable by $\epsilon$).
 In the following, we claim that the optimal temporally repeated flow is of the following form:
  \[x^*_0 = 2 x^*_1, \quad x^*_1 = \frac{3 x^*_2}{2}, \quad \dots, \quad x^*_{r-2} = \frac{r x^*_{r-1}}{r - 1}. \]
 First, let us assume that the solution is of such form.
 Then, by substituting, we get $x^*_i = \nicefrac{x^*_0}{i + 1}$ and $\sum x^*_i = 1$ implies that $x^*_0 \leq (\sum_{i=1}^r 1/i)^{-1}$.

 Also note that the objective function coefficient of path $P^i$ is $T - \tau_{e_{r - i - 1}} = r - (r - i - 1) = i + 1$ for all $i = 0,\dots,r-1$.
 Hence, $x^*$ satisfies for every $i,j$, $$x^*_i (T - \tau_{e_{r - i - 1}}) = \frac{x^*_0 (i+1)}{i+1} =  x^*_0 = \frac{x^*_0 (j+1)}{j+1} = x^*_j (T - \tau_{e_{r - j - 1}}).$$
 In other words, destroying any set of edges of cardinality $\Gamma$ results in the same objective function value.
 Hence, in every scenario, one of the $r$ paths remains unharmed, yielding an objective function value of $x^*_0 \leq (\sum_i 1/i)^{-1}$.

 Now, let us assume that the solution is not of such form.
 Then, there exist two indices $\ell = \arg\min_{i = 0,\dots,r - 1} x_{i} (T - \tau_{e_{r - i - 1}})$ and $k = \arg\max_{i = 0,\dots,r - 1} x_{i} (T - \tau_{e_{r - i - 1}})$
 with $x_{\ell} (T - \tau_{e_{r - \ell - 1}}) < x_k (T - \tau_{e_{r - k - 1}})$.
 The worst case scenario will delay all edges except for $e_{\ell}$.
 Averaging the flow shows that it was not optimal:
 Decreasing the flow on all edges with weighted flow value equal to $x_k(T - \tau_{e_{r - k - 1}})$ and
 increasing the flow on all edges with weighted flow value equal to $x_\ell (T - \tau_{e_{r - \ell - 1}})$ would strictly increase the robust flow value. \qed
\end{proof}

\begin{figure}[t]
\begin{center}
 \tikzstyle{vertex}=[circle, draw,
                        inner sep=1pt, minimum width=10pt]
\tikzstyle{edge} = [draw,very thick]
\begin{subfigure}[t]{.49\textwidth} \centering
    \begin{tikzpicture}[scale = 1.6]
      \draw[fill=white,draw=white,use as bounding box] (-0.2,-0.8) rectangle (3.2,0.8);

      \node[vertex] (v0) at (0, 0){ $s$};
      \node[vertex] (v1) at (2, 0){ $v$};
      \node[vertex] (v2) at (3, 0){ $d$};

      \node[anchor=north east] at (3.2, 0.8){$T = 3, \Gamma = 2$};

      \draw (v1) edge[->, very thick] node[anchor = south] { 0,0}  (v2);
      \draw (v0) edge[->, very thick, bend left = 55] node[anchor = south] { 0,3}  (v1);
      \draw (v0) edge[->, very thick] node[anchor = south] { 1,2}  (v1);
      \draw (v0) edge[->, very thick, bend right = 55] node[anchor = north] { 2,1}  (v1);
    \end{tikzpicture}
    \caption{\parbox{.9\textwidth}{Instance $I_3$ from Proposition \ref{prop:temp-rep:log-gap}. Edge labels denote the travel times and possible delays, respectively.}}
    \label{fig:temp-rep:gap-instances:log-gap}
\end{subfigure}
\begin{subfigure}[t]{.5\textwidth} \centering
    \begin{tikzpicture}[scale = 1.6]

      \node[vertex] (v0) at (0, 0){ $s$};
      \node[vertex] (v1) at (1.2, 0){ $v_1$};
      \node[vertex] (v2) at (2.2, 0){ $v_2$};
      \node[vertex] (v3) at (3.4, 0){ $d$};

        \draw (v0) edge[->, very thick, bend left = 55] node[anchor = south] { $0,\infty$}  (v1);
        \draw (v0) edge[->, very thick] node[anchor = south] { $1,\infty$}  (v1);
        \draw (v0) edge[->, very thick, bend right = 55] node[anchor = north] { $2,\infty$}  (v1);

        \draw (v1) edge[->, very thick] node[anchor = south] { $0,0$}  (v2);

        \draw (v2) edge[->, very thick, bend left = 55] node[anchor = south] { $0,\infty$}  (v3);
        \draw (v2) edge[->, very thick] node[anchor = south] { $1,\infty$}  (v3);
        \draw (v2) edge[->, very thick, bend right = 55] node[anchor = north] { $2,\infty$}  (v3);

      \node[anchor=north] at (1.7, 0.8){$T = 3, \Gamma = 2$};

      \draw[white] (0, -0.7) circle (1mm);
    \end{tikzpicture}
    \caption{\parbox{.9\textwidth}{Instance $I_3$ from Proposition \ref{prop:temp-rep:linear-gap}. Edge labels denote the travel times and possible delays, respectively.}}
    \label{fig:temp-rep:gap-instances:linear-gap}
\end{subfigure}
\end{center}
    \caption{Instances showing an optimality gap for temporally repeated flows.}
    \label{fig:temp-rep:gap-instances}
\end{figure}

\begin{restatable}{proposition}{proptempreplineargap}
 \label{prop:temp-rep:linear-gap}
 There are instances for robust maximum flow over time whose gap between an optimal temporally repeated flow and an optimal flow is $T$ and $\Gamma+1$.
\end{restatable}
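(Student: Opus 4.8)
The plan is to exhibit an explicit family of instances, indexed by a parameter $r$, that realizes the claimed gap, following the same two-claim template used in the proof of Proposition~\ref{prop:temp-rep:log-gap}. The natural candidate is the instance $I_r$ depicted in Figure~\ref{fig:temp-rep:gap-instances:linear-gap}: two ``gadgets'' in series, each consisting of $r$ parallel $s$-$v_1$ edges (resp.\ $v_2$-$d$ edges) with travel times $0,1,\dots,r-1$ and delays $\Delta \equiv \infty$, connected by a single middle edge $(v_1,v_2)$ with $\tau = \Delta = 0$, all capacities equal to $1$, and parameters $T = r$, $\Gamma = r-1$. The point of using $\Delta = \infty$ is that a delayed edge becomes completely unusable, so the adversary's only move is effectively to ``delete'' $\Gamma$ of the parallel edges; with two gadgets this lets us amplify the damage.

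First I would prove the lower bound on the optimal general solution. For each $i = 0,\dots,r-1$ send one unit of flow through the $i$-th edge of the first gadget during a dispatch interval chosen so that the flow particle traverses the middle edge and enters the second gadget at time $i$, then routes through the $(\text{something})$-th edge of the second gadget so as to arrive at $d$ before time $T = r$; the key combinatorial observation is that one can pick these routings so that at each edge and at each integer time at most one unit of flow is present, so feasibility holds, and so that the total nominal throughput is large. Then I would argue that no scenario $z \in \scenarios$ with $|z| \le \Gamma = r-1$ can destroy all the flow: any single delayed (hence unusable) edge kills only the flow routed through it, so a counting argument shows a positive amount — in fact a constant amount, or $\Omega(1)$ throughput — survives every scenario, giving $f^{\text{OPT}} = \Omega(1)$ (for the $T$-gap version one wants $f^{\text{OPT}}$ of order $1$ while $f_{\text{TR}}^{\text{OPT}} = O(1/T)$).

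Next I would bound an optimal temporally repeated flow. Here the dispatch interval of every path is forced to be $[0, T - \tau(P))$, so a path of nominal length $\tau(P)$ contributes only $(T - \tau(P)) x_P$ units; combined with the capacity constraints $\sum_{P : e \in P} x_P \le 1$ on the parallel edges and the fact that the adversary, by killing $\Gamma = r-1$ edges in \emph{one} gadget, can wipe out all but the single surviving path in that gadget, I would show that the robust value of any temporally repeated flow is at most $O(1/r) = O(1/T)$. The averaging/balancing argument is the same as in Proposition~\ref{prop:temp-rep:log-gap}: at the optimum the quantities $(T - \tau(P_j))x_{P_j}$ must all be equal within a gadget (otherwise the adversary targets the gadget and leaves only the cheapest surviving path, and a small transfer of flow strictly improves the robust value), and since the two gadgets are in series the load bottleneck forces the common value down to $O(1/r)$. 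Dividing the two bounds yields gap $\Omega(T)$; since $\Gamma = T - 1$ in this construction the same family also certifies gap $\Omega(\Gamma+1)$, and one can rescale $T$ and $\Gamma$ independently (e.g.\ by padding with extra zero-length edges) to get the bounds $T$ and $\Gamma+1$ separately as stated.

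The main obstacle I anticipate is the precise routing and feasibility bookkeeping for the general solution: one must choose the second-gadget edge for each of the $r$ flow units so that (a) every particle reaches $d$ strictly before $T$, (b) no edge ever carries more than its unit capacity at any time in any scenario, and (c) enough flow survives every $\Gamma$-deletion. Because $\Delta = \infty$, checking (b) under a scenario is easy (a delayed edge simply carries nothing), so the real work is the nominal packing — essentially arranging a Latin-square-like schedule of entry times across the two gadgets — plus the survival count in (c). The temporally-repeated upper bound, by contrast, should be routine once the equalization argument is in place, mirroring the corresponding claim in Proposition~\ref{prop:temp-rep:log-gap}.
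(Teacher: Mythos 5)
Your plan uses the paper's own instance and the paper's two-claim template, and the first half matches the paper's proof essentially verbatim: the ``something''-th edge you leave open is $e^2_{r-i-1}$, i.e.\ one unit is dispatched during $[0,1)$ into each diagonal path $P^i=(e^1_i,e^*,e^2_{r-i-1})$, which has length $i+(r-i-1)=r-1<T$; the occupancy intervals $[i,i+1)$ on $e^*$ and on the second-gadget edges are pairwise disjoint, so feasibility is immediate, and any $\Gamma=r-1$ delays destroy at most $r-1$ of the $r$ units, giving robust value at least $1$. Since $T=\Gamma+1=r$, one family certifies both gaps, exactly as in the paper.

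The one step that would fail as written is the adversary's move against a temporally repeated flow. Delaying $\Gamma=r-1$ edges inside a \emph{single} gadget does not ``leave only the cheapest surviving path'': it leaves one surviving edge, say $e^1_i$, through which up to $r$ paths $(e^1_i,e^*,e^2_j)$ may carry flow, with total surviving contribution $\sum_j (r-i-j)\,x_{ij}$ over those $j$ with $r-i-j>0$; this can be as large as $r$ (e.g.\ all flow on the length-zero path), so no $O(1/r)$ bound follows from single-gadget scenarios. The scenarios you need split the deletions across the two gadgets, as in the paper: for each $i$, delay $e^1_j$ for $j<i$ and $e^2_j$ for $j<r-i-1$ ($r-1$ edges in total). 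Any surviving path other than $P^i$ then has length at least $r=T$ and contributes nothing, so the robust value is at most $(T-\tau(P^i))x_i=x_i$; since every path crosses the unit-capacity edge $e^*$, $\sum_i x_i\le 1$, and minimizing over $i$ gives value at most $1/r$, with your balancing argument showing this is attained. With that correction the rest of your outline coincides with the paper's proof.
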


\begin{proof}
 For fixed $r \in \ZZ_+$, we construct an instance $I_r$ with graph $G_r = (V,E_r)$ as follows (see Figure \ref{fig:temp-rep:gap-instances:linear-gap}).
 The vertex set always consists of four vertices $V = \{s,v_1,v_2,d\}$.
 The edge set consists of a designated edge $e^* = (v_1,v_2)$ and two sets of $r$ parallel edges $e^1_i = (s,v_1)$ and $e^2_i = (v_2,d), i = 0,\dots,r-1$.
 We set $\tau_{e^*} = \Delta_{e^*} = 0$, $\tau_{e^1_i} = \tau_{e^2_i} = i, \Delta_{e^1_i} = \Delta_{e^2_i} = \infty, \Gamma = r - 1, T = r$.
 All edges have unit capacity.
 Combination of the two claims below yields an optimality gap of $\Omega(r)$.

\omsClaim{There exists a solution to $I_r$ with objective function value $1$.}
 We define $r$ paths $P^i = \{e^1_i,e^*,e^2_{r-i-1}\}$ and dispatch a single unit of flow in the interval $[0,1)$ into each.
 The flow is clearly feasible and has a nominal objective function value of $r$.
 Moreover, any $r-1$ attacked edges can destroy at most $r-1$ units of flow, hence, yielding a flow value of at least $1$.

\omsClaim{An optimal temporally repeated flow can send at most $1 / r$ units of flow.}
We start by defining a temporally repeated flow with robust solution value $1/r$ and show the optimality of this flow. Let $x_i=1/r$ for all paths $P^i, 0\leq i \leq r-1$.

 We will argue that $x$ is optimal with robust function value $1/r$.
 Let us check the objective value contribution in detail.
 Each path $P^i, i= 0,\dots,r - 1$ contributes $$T - \tau_{e^1_i} - \tau_{e^2_{r-i-1}} = r - i - (r - i - 1) = 1$$
to the objective. Hence, sending a flow value $x_i = 1/r$ on all $r$ paths yields an objective function value of $1/r$ independent of which $\Gamma = r - 1$ edges between $s$ and $v_1$ are destroyed.
 Note that attacking different edges does not make any sense for this type of solution.

 Now, we show that any feasible temporally repeated flow different than $x$ yields a smaller objective function value. Let $x'$ be a different solution, i.e.\ there is some path $P^i$ that has flow rate $x'_i < 1/r$.
 We will show that this solution has objective value at most $x'_i$.

 More precisely, we will show that there is a scenario which leaves only the path $P^i$ intact.
 We will use the scenario $z \in \scenarios$ with $z_{e^1_j} = 1$ for all $0 \leq j < i$, $z_{e^2_j} = 1$ for all $0 \leq j < r - i - 1$ and zero otherwise.
 The total number of edges destroyed by $z$ is $r - 1$. In the following, we show that in this scenario only $P^i$ with $x'_i<1/r$ contributes to the objective function value.

 If any other path $P \neq P^i$ should survive the scenario, it must avoid all edges affected by $z$. $P$ can neither contain edges from $\{e_1^1,\dots,e_{i-1}^1\}$ nor from $\{e_1^2,\dots,e_{r-i-2}^2\}$. Let us consider the path length of such a path $P$. The only possibility to achieve a path length smaller than $T=r$ is to use edges $e_i^1$ and $e_{r-i}^2$, i.e.\ $P = P^i$. We conclude that the objective value is at most $x'_i < 1/r$, which proves the optimality of $x$ and finishes the proof of the claim.
 \qed
\end{proof}

\subsection{Asymptotic Optimality}
Although the gaps seem large, they appear only if the time horizon is
relatively short, when compared to the travel times.
An asymptotic bound shows that the optimality gap diminishes as the time horizon increases.

\begin{restatable}{proposition}{proptemprepasymptoticbound}
For instances of the maximum robust flow over time problem with $\Delta_e < \infty$ for all $e \in E$, temporally repeated flows tend to optimality for $T \rightarrow \infty$, if all other parameters are fixed.
 \label{prop:temp-rep:asymptotic-bound}
\end{restatable}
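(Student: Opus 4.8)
The plan is to fix an instance with graph $G$, travel times $\tau$, finite delays $\Delta_e < \infty$, and capacities $u$, and to show that as $T \to \infty$ the ratio $f^{\text{OPT}}(T) / f^{\text{OPT}}_{\text{TR}}(T)$ tends to $1$. The natural strategy is to exhibit a single static flow $x$ whose temporally repeated version captures "almost all" of the optimal robust value for large $T$. I would start from an optimal \emph{general} solution $\{(P^i, f^i, [a^i,b^i))\}_{i=1,\dots,\omega}$ for horizon $T$. The key observation is that delays are bounded: let $D := \sum_{e \in E} \Delta_e$ (or even $\Gamma \cdot \max_e \Delta_e$) be a constant independent of $T$. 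Any path $P$ in any scenario has length at most $\tau(P) + D \le (|V|-1)\max_e \tau_e + D =: L$, a constant. So for $T$ large, the restriction "$T$-bounded path length" is automatically satisfied, and more importantly the loss incurred by truncating dispatch intervals at $T - L$ instead of $T$ is a lower-order effect.

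The main steps I would carry out: (1) Show $f^{\text{OPT}}(T) \le c \cdot T$ and, via a simple averaging/static-flow argument, $f^{\text{OPT}}(T) \ge c' \cdot T$ for constants $c, c' > 0$ depending only on the fixed instance (assuming the instance admits any $s$-$d$ flow at all; otherwise the gap is trivially $1$). This pins the growth rate of the optimum at $\Theta(T)$. (2) From a good general solution, extract a static flow $x$ by time-averaging: set $x_P = \frac{1}{T}\int_0^T f_P(\theta)\,d\theta$ summed appropriately over the triples using path $P$; feasibility $\sum_{P \ni e} x_P \le u_e$ follows because the general solution obeys capacities at every instant. (3) Build the temporally repeated flow from $x$ and estimate its robust value: in any scenario $z$, the throughput of the temporally repeated flow is $\sum_P (T - \tau(P) - \Delta_z(P)) x_P \ge \sum_P (T - L) x_P = (T-L)\sum_P x_P$, whereas the original general solution's value in scenario $z$ is at most $T \sum_P x_P$ plus the discrepancy coming from the fact that the adversary attacks the \emph{per-path} flow rather than a uniform rate. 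The ratio between $(T-L)$ and $T$ tends to $1$; the remaining task is to control the second discrepancy.

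The hard part will be step (3)'s second discrepancy: in a general solution the adversary can concentrate its $\Gamma$ delays on whichever paths carry the most flow \emph{in a given dispatch interval}, which may differ from interval to interval, so a general solution can "hide" flow from the adversary in a way a temporally repeated flow cannot — this is exactly the phenomenon exploited in Propositions~\ref{prop:temp-rep:linear-gap} and~\ref{prop:temp-rep:log-gap}. The point is that this advantage is bounded: the adversary destroys at most $\Gamma$ paths entirely, and each path carries flow rate at most $\max_e u_e$, so the \emph{additive} gap between the robust value of the general solution and that of the temporally repeated flow built from its time-average is at most $O(\Gamma \cdot \max_e u_e \cdot L)$, a constant. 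Dividing by $f^{\text{OPT}}(T) = \Theta(T)$ gives a multiplicative gap of $1 + O(1/T) \to 1$. I would phrase the final estimate as
\[
\frac{f^{\text{OPT}}(T)}{f^{\text{OPT}}_{\text{TR}}(T)} \;\le\; \frac{f^{\text{OPT}}(T)}{f^{\text{OPT}}(T) - O(1)} \cdot \frac{T}{T - L} \;\xrightarrow{T \to \infty}\; 1,
\]
with all hidden constants depending only on $G$, $\tau$, $\Delta$, $u$, and $\Gamma$ but not on $T$. The only genuine subtlety to nail down carefully is making the "additive constant loss" argument rigorous — identifying precisely which quantity the adversary can improve by attacking a general solution versus a temporally repeated one, and bounding it by counting destroyed paths times their capacity-bounded rates times their length-bounded travel times.
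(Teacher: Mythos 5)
Your proposal is correct, but it takes a more roundabout route than the paper. The paper's proof never touches the robust-optimal general solution at all: it compares both sides to the \emph{nominal} optimum $F^*(T)$. On one hand $f^{\text{OPT}}(T) \leq F^*(T)$, since the nominal optimum over general solutions is attained by a temporally repeated flow (Ford--Fulkerson) and robustness can only decrease the value; on the other hand, that same nominal-optimal temporally repeated flow loses at most $\lambda^* = \Gamma \max_{e}\Delta_e u_e$ in any scenario (each delayed edge $e$ cuts off at most $\Delta_e x_e \leq \Delta_e u_e$ worth of flow), so $f^{\text{OPT}}_{\text{TR}}(T) \geq F^*(T) - \lambda^*$ and the ratio is at most $F^*(T)/(F^*(T)-\lambda^*) \to 1$. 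Your route instead time-averages the robust-optimal general solution into a static flow $x$ and temporally repeats it; that also works, and your averaging step is sound (integrating the nominal capacity constraints over $[0,T)$ does give $\sum_{P \ni e} x_P \leq u_e$). However, the ``genuine subtlety'' you flag at the end is a red herring: you do not need to track how the adversary redistributes its attack across dispatch intervals, because the empty scenario lies in $\mathcal{S}$, so the robust value of \emph{any} solution is bounded above by its nominal value, giving $f^{\text{OPT}}(T) \leq T\sum_P x_P$ directly, against $f^{\text{OPT}}_{\text{TR}}(T) \geq (T-L)\sum_P x_P$ with $L$ your constant bound on delayed path lengths. (The fact that the adversary might hurt the general solution \emph{more} than its time-average suggests only helps the bound.) What the paper's version buys is brevity and an explicit additive constant $\lambda^*$; what yours buys is nothing extra here, though both arguments share the same underlying mechanism --- constant additive damage against linear growth --- and both implicitly require that the instance admits positive flow so that the optimum actually tends to infinity, a degenerate case you rightly set aside.
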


\begin{proof}
Let $(G,s,t,u,\Delta,\Gamma)$ with $\Delta_e < \infty$ for all $e \in E$ be an instance of robust max flow over time, with values that do not depend on the time horizon $T$.
If we denote the optimal objective function value of a temporally repeated flow by $f_{\text{TR}}^{\text{OPT}}(T)$ and the value of a general flow by $f^{\text{OPT}}(T)$ for time horizon $T$,
then for every $\epsilon > 0$ we show that there exists a $T' \in \RR$ such that  $f^{\text{OPT}}(T)/f_{\text{TR}}^{\text{OPT}}(T) \leq 1 + \epsilon$ for all $T \geq T'$.
 Since $\Delta$ is supposed to be a constant, we can assume that the choice of any scenario destroys at most a value of $\lambda^* = \Gamma \max_{e \in E} \Delta_e u_e$.
 For sufficiently large $T$, any optimal nominal temporally repeated flow will send a flow value of at least $F^*(T) - \lambda^*$, where $F^*(T)$ is the nominal optimal value for time horizon $T$.
 Moreover, as temporally repeated flows are optimal in the nominal case, we can deduce $f^{\text{OPT}} \leq F^*(T)$.
 Thus, $$ \frac{f^{\text{OPT}}(T)}{f_{\text{TR}}^{\text{OPT}}(T)} \leq \frac{F^*(T)}{F^*(T) - \lambda^*},$$ which tends to one as $T$ tends to infinity. \qed
\end{proof}

\subsection{Upper Bounds}
In the remainder of this section, we prove an upper bound on the gap between the objective function values of optimal general solutions and optimal temporally repeated flows.
This gap depends on some graph parameter $k$ introduced below.
Note that flow sent along a particular path $P\in \mathcal{P}$ only has a chance to reach the destination if each edge $e\in P$ is reached by the flow within interval
$I_{e,P}:=[\tau^{<e}(P), T - \tau^{\geq e}(P)]$, where $\tau^{<e}(P) = \sum_{e'\in E : e' <_P e}\tau(e')$ is the time required for a flow particle on path $P$ to reach edge $e$.
We call an instance of maximum flow over time \emph{$k$-coverable} if for each edge $e\in E$ it is possible to select $k$ points in time to cover all intervals $\{I_{e,P}\}_{P\in \mathcal{P}: e\in P}$.
The same definition holds for the robust counterpart. Note that this definition only depends on the graph $G$, the vertices $s,d$, the travel times $\tau$ and the time horizon $T$.
Denote these $k$ points in time by $t^e_1, \ldots t^e_k$.
We also call these points \emph{witnesses} of edge $e$.
That is, we define for each edge $e\in E$ the interval graph $H_e$ whose vertex set corresponds to the intervals $\{I_{e,P}\}_{P\in \mathcal{P}: e\in P}$ two of which are linked by an edge if and only if the associated intervals overlap.
Then the instance is $k$-coverable if and only if the vertices of each of the interval graphs $H_e$ can be covered by not more than $k$ cliques.
Since the clique-covering-number equals the maximum cardinality of a stable set by Dilworth's Theorem \cite{dilworth1950decomposition}, we define

\begin{definition}
An instance of maximum robust flow over time is \emph{$k$-coverable} if and only if the maximum size of a stable set in all of the associated interval graphs $H_e, e\in E,$ is at most $k$.
\end{definition}

For example, an instance satisfying $\max_{P \in \mathcal{P}} \tau^{<e}(P) + \max_{P \in \mathcal{P}} \tau^{\geq e}(P) \leq T$ for all $e \in E$ is $1$-coverable:
for an edge $e$, select $t^e_1 \in [\max_{P \in \mathcal{P}} \tau^{<e}(P), T-\max_{P \in \mathcal{P}} \tau^{\geq e}(P)]$.
This inequality is fulfilled in directed acyclic graphs whose longest path length does not exceed the time horizon, i.e.\ where $\max_{P \in \mathcal{P}} \tau(P) \leq T$. Note that maximum robust flow over time instances on a directed acyclic graph with the $T$-bounded path length property, i.e.\ with $\max_{P \in \mathcal{P}, z \in \scenarios} \tau(P)+\Delta_z(P) \leq T$, are also 1-coverable.

More vividly, let us consider the graphs in Figure \ref{fig:temp-rep:gap-instances}.
The graph in (a) has $k=1$, as its longest path length does not exceed the time horizon.
The graph in (b) has $k=2$, as can be seen by considering the edge $e^* = (v_1,v_2)$ and the following three paths.
Let $P_1$ be the path with the top edge from $s$ to $v_1$ and the bottom edge from $v_2$ to $d$.
It has interval $I_{e^*,P_1} = [0,1]$.
$P_2$ with both middle edges has $I_{e^*,P_2} = [1,2]$.
Finally, $P_3$ with the first bottom edge and the last top edge has $I_{e^*,P_3} = [2,3]$.
A minimum covering of these intervals can be achieved by choosing $t^{e^*}_1 = 1, t^{e^*}_2 = 2$.
Hence, $k$ is at least two.
One can easily check that the remaining intervals do not change this.
Moreover, one can observe that all remaining edges are 1-coverable.

It is shown in \cite{gupta1982efficient} that a stable set of maximum cardinality in an interval graph, in general, can be found by a simple greedy approach: \emph{sweep} from right to left through the whole domain, in our case $[0,T]$, and,
iteratively, select the interval with rightmost left endpoint. Remove this interval together with all intersecting intervals from the list, until no intervals are remaining.
In this work, we do not discuss the complexity status of computing $k$ in detail. We only note that the greedy approach described above is, in general, certainly not strongly polynomial as it depends on the time horizon $T$ and on the number of $s$-$d$ paths in the given graph.

Next, we prove upper bounds on the optimality gap of temporally repeated flows.
Since the proofs are rather technical, we distinguish between the case $\Delta_e \in \{0,\infty\}$ for all $e \in E$ and the case $\Delta_e \in \ZZ_+$.
The former turns out to be simpler and is covered in Theorem \ref{thm:temp-rep:k-log-T-gap}, the latter is covered in Theorem \ref{thm:temp-rep:delta-arbitrary} and builds upon the former.

\begin{theorem}
  \label{thm:temp-rep:k-log-T-gap}
  Let a $k$-coverable instance with $\Delta_e \in \{0,\infty\}$ for all $e \in E$ be given. Then, an optimal temporally repeated solution is an $O(k \log T)$-approximation for the robust maximum flow over time problem.
\end{theorem}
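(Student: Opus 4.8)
The plan is to set up a linear programming duality framework for the general robust maximum flow over time problem, and then to round the dual solution to obtain a feasible temporally repeated flow whose value is within $O(k \log T)$ of the general optimum. Concretely, I would first write a (possibly exponential-size) LP whose variables are the triples $(P,f,[a,b))$ of Definition~\ref{def:triples}; since $\Delta_e \in \{0,\infty\}$, a scenario $z$ simply \emph{kills} every path using a delayed edge (its arrival time jumps past $T$), so the robust value of a general solution equals the nominal value minus the maximum, over $z \in \scenarios$, of the flow routed on paths meeting $\mathrm{supp}(z)$. Dualizing, an optimal general solution is governed by dual variables that can be interpreted as a combination of edge-capacity prices and a fractional "attack budget" of size $\Gamma$ spread over edges; the $k$-coverability hypothesis is what lets me replace the continuum of dispatch times on each edge $e$ by the $k$ witnesses $t^e_1,\dots,t^e_k$, so that all relevant capacity constraints live at finitely many time points.

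The core of the argument is a primal-dual / fitting step. I would take an optimal general solution, restrict attention to the witness times, and argue that on each edge the flow "seen" at the $k$ witnesses accounts, up to a factor $k$, for all the flow the general solution pushes through $e$. Then I would build a static flow $x$ for the temporally repeated solution by averaging the general solution's path flows, but scaling down geometrically: split the time horizon $[0,T)$ into $O(\log T)$ dyadic blocks according to how much "slack" $T - \tau(P)$ a path has, and in each block keep a $\Theta(1/\log T)$ fraction of the flow. Routing this $x$ as a temporally repeated flow keeps all paths within the time horizon in the scenarios that matter (because a path that is not killed has enough slack to deliver its flow), and the geometric split guarantees that no single scenario $z$ — which kills at most $\Gamma$ edges — can destroy more than an $O(k)$-fraction of the total. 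Combining the $1/\log T$ scaling with the $k$ loss from witnesses yields the claimed $O(k\log T)$ ratio; I would phrase the final comparison as an inequality $f^{\mathrm{OPT}}_{\mathrm{TR}} \ge \frac{1}{O(k\log T)} f^{\mathrm{OPT}}$ obtained by exhibiting this specific temporally repeated flow.

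The main obstacle I anticipate is handling the interaction between \emph{when} flow is dispatched and \emph{which} scenario is worst: in a general solution the adversary's best $\Gamma$ edges may depend subtly on the dispatch intervals, whereas a temporally repeated flow is forced to dispatch everything starting at time $0$, so I must show that this rigidity costs only a logarithmic factor. The dyadic decomposition by slack is the device meant to control this, but making it rigorous requires showing (i) that flow with slack in a given dyadic range can be safely "restarted" at time $0$ without violating capacities once scaled by the block's fraction, and (ii) that the adversary, facing the temporally repeated flow, cannot concentrate its $\Gamma$ deletions to beat the per-block guarantee by more than the $k$-coverability factor. I expect step (ii) — a counting argument that ties the number of witnesses $k$ to the number of essentially different "arrival configurations" an attacked edge can induce — to be the technically delicate part, and I would isolate it as a separate lemma before assembling the bound.
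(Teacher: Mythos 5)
Your high-level plan (LP duality, the $k$ witnesses, a logarithmic decomposition tied to the slack $T-\tau(P)$) points at the right ingredients, but the argument has a genuine gap, and two of your intermediate claims are not correct as stated. First, $k$-coverability does \emph{not} imply that ``the flow seen at the $k$ witnesses accounts, up to a factor $k$, for all the flow the general solution pushes through $e$'': a general solution can push up to $T\cdot u_e$ units through $e$ over the horizon while only about $k\cdot u_e$ is visible at the $k$ witness times, so the loss is governed by the \emph{widths} of the arrival windows $I_{e,P}=[\tau^{<e}(P),\,T-\tau^{\geq e}(P)]$, not merely by the number of cliques needed to cover them. Second, your step (ii) --- that no single scenario can destroy more than an ``$O(k)$-fraction'' of the total --- is not the statement you need and fails for the flows in question: in the harmonic lower-bound instance of Proposition~\ref{prop:temp-rep:log-gap} the adversary destroys all but one path of the optimal temporally repeated flow. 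What must be bounded is the \emph{remaining} value relative to $f^{\text{OPT}}$, and that cannot be obtained from a per-scenario destroyed-fraction argument.

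The missing piece is the quantitative covering step that actually produces the $\log T$. The paper argues in the opposite direction (dual fitting rather than primal rounding): it takes an optimal dual $(\alpha^*,\beta^*)$ of the temporally repeated LP, represents each dual path constraint as boxes $B^e_P$ of width $T-\tau(P)$ and height $\bigl(1-\beta^*(P)\bigr)\alpha^*_e/\alpha^*(P)$, observes that $\mathrm{vol}(B^e_P)\le\alpha^*_e$, and shows that the area of the \emph{union} of all boxes through a fixed witness $t^*$ is at most $O(\log T)\,\alpha^*_e$ by repeatedly selecting the tallest box reaching past the midpoint of the remaining interval and halving; summing over the $k$ witnesses gives $O(k\log T)\,\alpha^*_e$, and the height profile of the union is then verified to be a feasible dual for the general LP. Your dyadic decomposition by slack is a plausible primal analogue of this halving step --- indeed, on the harmonic instance the optimal static rates are proportional to the inverse slack --- but you never establish the two facts it must deliver: that the scaled static flow is capacity-feasible (the static superposition of the general solution's paths can exceed $u_e$ by a factor up to $T$), and that its robust value after the adversary optimizes against it is at least $f^{\text{OPT}}/O(k\log T)$. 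Without a lemma playing the role of the union-of-boxes bound, the proof does not go through.
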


Additionally, we derive an upper bound for general instances.
Let $\bar{\Delta}_z(P)$ denote the effective amount of flow cut off by scenario $z \in \scenarios$ on path $P$, that is, $\bar{\Delta}_z(P) = \min \{ \Delta_z(P), T - \tau(P) \}$.
Then we define $$\eta = \max_{P \in \paths,z : \bar{\Delta}_z(P) < T - \tau(P)} \frac{T - \tau(P)}{T - \tau(P) - \bar{\Delta}_z(P)},$$ or $\eta = 1$, if no such path exists.
The value can be interpreted as follows.
If $\eta = 1$, the total contribution of flow on a path $P$ in a temporally repeated solution is either $x_P(T - \tau(P))$, or zero, depending on whether or not the path is attacked by the worst-case scenario.
If $\eta$ is large, the ratio between the contribution of path $P$, if $P$ is not attacked, and the non-zero contribution for some scenarios might be as large as $\eta$.
We will see that this ratio makes it harder to estimate the loss in the proof of Theorem \ref{thm:temp-rep:delta-arbitrary}.

\begin{theorem}
 \label{thm:temp-rep:delta-arbitrary}
Let a $k$-coverable instance with $\Delta \in \ZZ_+$ be given. Then,
an optimal temporally repeated solution is an $O(\eta k \log
T)$-approximation for the robust maximum flow over time problem.
\end{theorem}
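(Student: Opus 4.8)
The plan is to reuse the explicit construction behind Theorem~\ref{thm:temp-rep:k-log-T-gap} and to show that finite delays cost only an additional factor of $\eta$ in the final accounting. Fix an optimal general solution $f=\{(P^i,f^i,[a^i,b^i))\}_{i=1,\dots,\omega}$ with robust value $f^{\text{OPT}}$, discard every triple with $\tau(P^i)\ge T$ (it never contributes), and for the rest set $\ell^i:=\bigl|[a^i,b^i)\cap[0,T-\tau(P^i))\bigr|$, the flow-rate-time dispatched by triple $i$ that still reaches $d$ before $T$ in the \emph{nominal} scenario. Group the occurring $s$-$d$-paths into the $\lfloor\log_2 T\rfloor+1$ buckets $\mathcal{P}_j:=\{P:2^{j}\le T-\tau(P)<2^{j+1}\}$ and define the candidate temporally repeated flow
\[
x_P\;:=\;\frac{1}{c\,k\,(\lfloor\log_2 T\rfloor+1)}\cdot\frac{1}{2^{j}}\sum_{i:\,P^i=P}f^i\ell^i\qquad\text{for }P\in\mathcal{P}_j,
\]
with a suitable absolute constant $c$; the dispatch interval of $P$ is the mandatory $[0,T-\tau(P))$.

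First I would check that $x$ is a feasible temporally repeated flow, i.e.\ $\sum_{P\in\mathcal{P}:e\in P}x_P\le u_e$ for every $e\in E$; this is the step that uses $k$-coverability, exactly as in Theorem~\ref{thm:temp-rep:k-log-T-gap}. Fix $e$ and a bucket $j$. Every $P\in\mathcal{P}_j$ through $e$ has $|I_{e,P}|=T-\tau(P)<2^{j+1}$, so charging each such $P$ to a witness $t^e_\cdot\in I_{e,P}$ splits these paths into at most $k$ groups, and within one group the to-be-delivered flow of all triples crosses $e$ inside a common time window of length $2^{j+2}$. Since $f_e(\theta)\le u_e$ already holds in the nominal scenario, the total flow-rate-time such a group sends through $e$ is at most $2^{j+2}u_e$; hence $\sum_{i:\,e\in P^i,\,P^i\in\mathcal{P}_j}f^i\ell^i\le 2^{j+2}k\,u_e$. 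Dividing by $2^{j}$, summing over the $\lfloor\log_2 T\rfloor+1$ buckets, and choosing $c\ge 4$ gives $\sum_{P\ni e}x_P\le u_e$.

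Next I would lower bound the robust value of $x$. By construction, $x_P\,(T-\tau(P))\ge \tfrac{1}{c\,k\,(\lfloor\log_2 T\rfloor+1)}\sum_{i:P^i=P}f^i\ell^i$ for every path $P$, because $T-\tau(P)\ge 2^j$ on $\mathcal{P}_j$. Fix an arbitrary scenario $z\in\scenarios$. A path $P$ that is not fully cut off satisfies, by definition of $\eta$ and of $\bar{\Delta}_z(P)$, the inequality $T-\tau(P)-\Delta_z(P)=T-\tau(P)-\bar{\Delta}_z(P)\ge\tfrac{1}{\eta}\bigl(T-\tau(P)\bigr)$, so its contribution $x_P\max\{0,T-\tau(P)-\Delta_z(P)\}$ is at least $\tfrac{1}{\eta}x_P(T-\tau(P))$; a fully cut-off path contributes $0$, which is also $\ge\tfrac{1}{\eta}$ times its contribution to $f$ under $z$. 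Summing over $P$ and using, on the one hand, that $f$ has value at least $f^{\text{OPT}}$ in scenario $z$ and, on the other hand, that this value equals $\sum_i f^i\bigl|[a^i,b^i)\cap[0,T-\tau(P^i)-\Delta_z(P^i))\bigr|\le\sum_{i:\,P^i\text{ not fully cut off by }z}f^i\ell^i$, I obtain
\[
\sum_{P}x_P\max\{0,T-\tau(P)-\Delta_z(P)\}\;\ge\;\frac{1}{\eta\,c\,k\,(\lfloor\log_2 T\rfloor+1)}\,f^{\text{OPT}}\;=\;\frac{f^{\text{OPT}}}{O(\eta k\log T)}.
\]
As $z$ was arbitrary, $f_{\text{TR}}^{\text{OPT}}$ is at least the robust value of $x$, which is $\Omega\!\bigl(f^{\text{OPT}}/(\eta k\log T)\bigr)$, proving the theorem.

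The only genuinely delicate point — and the precise sense in which this proof "builds upon" Theorem~\ref{thm:temp-rep:k-log-T-gap} rather than being identical to it — is the last step: the bucketing is done with respect to the \emph{nominal} lengths $T-\tau(P)$, while a scenario may shrink the effective delivery window of a path that nonetheless survives, and one has to argue that this loss is bounded by $\eta$ path-by-path and that it multiplies cleanly with the $O(k\log T)$ loss already incurred by the feasibility rescaling, rather than interacting with it. The remainder is bookkeeping: treating $\tau(P)\ge T$ and empty buckets, clipping time windows to $[0,T]$ when invoking the capacity bound, and pinning down the constant $c$.
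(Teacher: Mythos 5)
Your proof is correct, but it takes a genuinely different route from the paper's. The paper argues entirely on the dual side: it sets up primal--dual LP pairs for temporally repeated and for general solutions, interprets an optimal dual of the temporally repeated LP as a family of boxes (staircase polygons once $\Delta$ is finite) in $[0,T]\times[0,1]$, and builds a feasible dual for the general LP whose cost is bounded by covering the union of all boxes at each of the $k$ witnesses with $O(\log T)$ shifted copies of single boxes; the factor $\eta$ appears there when each staircase polygon is replaced by its bounding rectangle. You instead work on the primal side: you bucket the paths of an optimal general solution by their nominal slack $T-\tau(P)$ into $O(\log T)$ dyadic classes, rescale the delivered flow-rate-time by $\bigl(c\,k\,(\lfloor\log_2 T\rfloor+1)\,2^{j}\bigr)^{-1}$ to get path rates, verify feasibility via $k$-coverability plus the nominal capacity constraint (each witness window in bucket $j$ carries at most $2^{j+2}u_e$), and lower-bound the robust value scenario by scenario, losing the factor $\eta$ exactly where a surviving path's delivery window shrinks from $T-\tau(P)$ to $T-\tau(P)-\bar{\Delta}_z(P)$. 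The inequality bounding the value of $f$ under $z$ by $\sum_{i:\,P^i\text{ survives }z} f^i\ell^i$ is what makes the three losses multiply rather than interact, as you correctly flag. Your argument is more elementary and constructive---it exhibits an explicit temporally repeated flow certifying the gap---whereas the paper's dual fitting never fixes an optimal general solution and reuses its geometric covering verbatim between the $\Delta\in\{0,\infty\}$ and general cases. Both give the same $O(\eta k\log T)$ bound, and yours correctly specializes to $O(k\log T)$ when $\Delta\in\{0,\infty\}$, where $\eta=1$.
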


The proof strategy of both theorems is the same and can be viewed as a dual fitting approach.
We present primal-dual pairs (\ref{LP:TR-P}), (\ref{LP:TR-D}) modeling robust temporally repeated flows and (\ref{LP:EX-P}), (\ref{LP:EX-D}) modeling general solutions to the robust flow over time problem.
It is clear by strong duality that $opt(\ref{LP:TR-P}) = opt(\ref{LP:TR-D})$ and $opt(\ref{LP:EX-P}) = opt(\ref{LP:EX-D})$.
Hence, in order to prove a bound on the optimality gap, it suffices to show $opt(\ref{LP:EX-D}) \leq \alpha opt(\ref{LP:TR-D})$, where $\alpha \geq 1$ is the upper bound on the optimality gap from Theorem \ref{thm:temp-rep:k-log-T-gap} and \ref{thm:temp-rep:delta-arbitrary}, respectively.
With this factor, we conclude
\begin{align}
 opt(\ref{LP:EX-P}) = opt(\ref{LP:EX-D}) \leq \alpha opt(\ref{LP:TR-D}) = \alpha opt(\ref{LP:TR-P})
 \Leftrightarrow \frac{f^{\text{OPT}}}{f^{\text{OPT}}_{\text{TR}}} = \frac{opt(\ref{LP:EX-P})}{opt(\ref{LP:TR-P})} \leq \alpha. \label{thm:temp-rep:bound-strategy}
\end{align}
We bound the factor $\alpha$ via a geometric box interpretation of solutions in the dual problems.
From an optimal solution of (\ref{LP:TR-D}), we construct a feasible solution of (\ref{LP:EX-D}), guaranteeing that the objective function values differ by at most the factor of $\alpha$.

In the remainder, we prove Theorems \ref{thm:temp-rep:k-log-T-gap} and \ref{thm:temp-rep:delta-arbitrary}.
We start by introducing the LP models. (\ref{LP:TR-P}) is a model for temporally repeated flows with corresponding dual (\ref{LP:TR-D}).
Without the constraints for scenarios $z$,  (\ref{LP:TR-P}) models (nominal) temporally repeated flows.
The variable $\lambda$ corresponds to the amount of flow that is lost due to the actions of the adversary.
 \begin{align*}
  \max_{x \in \RR_+^{|\mathcal{P}|},\ \lambda \in \RR_+}\quad & \sum_{P \in \mathcal{P}} (T - \tau(P) )x_P - \lambda \tag{P} \label{LP:TR-P} \\
  \text{s.t.} \quad & \sum_{\substack{P \in \mathcal{P} : \\ e \in P}} x_P \leq u_e & \quad && \forall e \in E \\
  & \sum_{\substack{P \in \mathcal{P} : \\ z \cap P \neq \emptyset}} \bar{\Delta}_z(P) x_P - \lambda \leq 0 & \quad && \forall z \in \scenarios
 \end{align*}
The corresponding dual with variables $\alpha_e$ and $\beta_z$ is:
 \begin{align*}
  \min_{\alpha \in \RR_+^{|E|},\ \beta \in \RR_+^{|\scenarios|}}\quad & \sum_{e \in E} u_e \alpha_e \tag{D} \label{LP:TR-D} \\
  \text{s.t.} \quad & \sum_{z \in \scenarios} \beta_z \leq 1  \\
  & \sum_{e \in P} \alpha_e + \sum_{\substack{z \in \scenarios : \\ z \cap P \neq \emptyset}} \bar{\Delta}_z(P) \beta_z \geq T - \tau(P) & \quad && \forall P \in \paths
 \end{align*}

Now, we present a model (\ref{LP:EX-P})  for general robust solutions.
We introduce variables $x_P^i$ which correspond to the flow rate sent into path $P$ in the dispatch interval $[i, i+1)$. Here, we implicitly use the fact that we can restrict to
integer dispatch intervals, as shown in Proposition~\ref{prop:modelling-techniques:piecewise-constant}. As before, $\lambda$ is the loss incurred after the adversary acts.
For ease of notation, we assume that all variables with a negative index in (\ref{LP:EX-P}) are omitted, e.g.\ a variable $x_P^{-5}$ is assumed to be excluded from the model.
This can also be seen as having variables $x_P^i$ for all $i \in \ZZ$ and forcing $x_P^i = 0$ for all $i < 0$ or $i \geq T - \tau(P)$.
We use the notation $\Delta^{<e}_z(P) = \sum_{e' \in P : e' <_P e} \Delta_{e'}= z_{e'}$ to denote the delay on path $P$ induced by scenario $z$ until edge $e$.
\begin{align*}
 \max_{x,\ \lambda \geq 0}\quad & \sum_{P \in \mathcal{P}} \sum_{0 \leq i < T - \tau(P)} x_P^i - \lambda \tag{P'} \label{LP:EX-P} \\
 \text{s.t.} \quad & \sum_{\substack{P \in \mathcal{P} :\\ e \in P}} x_P^{t - \tau^{<e}(P) - \Delta^{<e}_z(P) } \leq u_e \quad && \forall e \in E, \forall 0 \leq t < T, \forall z \in \scenarios \\
 & \sum_{P \in \mathcal{P}} \sum_{i \geq T - \tau(P) - \Delta_z(P)} x_P^i - \lambda \leq 0 \quad && \forall z \in \scenarios
\end{align*}
The corresponding dual with variables $\alpha_e^t(z)$ and $\beta_z$ is:
\begin{align*}
 \min_{\alpha,\ \beta \geq 0}\quad & \sum_{e \in E} u_e \sum_{ 0 \leq t < T} \sum_{z \in \scenarios} \alpha_e^t(z) \tag{D'} \label{LP:EX-D} \\
 \text{s.t.} \quad & \sum_{z \in \scenarios} \beta_z \leq 1 \\
 & \sum_{e \in P} \sum_{z \in \scenarios} \alpha_e^{ i + \tau^{<e}(P) + \Delta^{<e}_z(P) }(z) + \sum_{z \in \scenariosDualPath{i}{P} } \beta_z \geq 1 \quad && \forall P \in \paths, \forall 0 \leq i < T - \tau(P).
\end{align*}
By $\scenariosDualPath{i}{P}$ we denote for a path $P$ and time $i$ the set of scenarios for which flow on path $P$ sent into the network at time $i$ will not reach the destination, that is, $\scenariosDualPath{i}{P} = \{ z \in \scenarios : i \geq T - \tau(P) - \Delta_z(P) \}$.
In other words, these are the scenarios for path $P$ which prevent flow sent at time $i$ from contributing to the objective.

\sectionheadline{Graphical Interpretation of (\ref{LP:TR-D})}

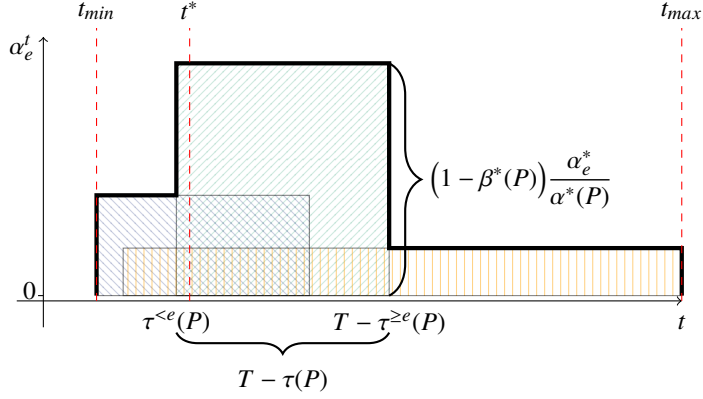
\begin{figure}[tb]
\begin{center}
\begin{tikzpicture}[scale=0.7, every node/.style={scale=1.2}]
  \draw[fill=white,draw=white,use as bounding box] (0,-1) rectangle (15.5,6.5);

  % Scale
  \begin{scope}
    \draw[->] (0.5,1) -- (13,1);
    \draw[->] (1,0.5) -- (1,6);

    \node at (0.74,1.2) {$0$};
    \draw[] (0.9,1.1) -- (1,1.1);

    \node at (0.6,5.8) {$\alpha^t_{e}$};
    \node at (13,0.6) {$t$};
  \end{scope}

  \begin{scope}

    \transparent{0.5}
    \draw[pattern=north east lines,pattern color=darkgreen] (3.5,1.1) rectangle (7.5,5.5);
    \draw[pattern=north west lines,pattern color=darkblue] (2.0,1.1) rectangle (6.0,3.0);
    \draw[pattern=vertical lines,pattern color=darkorange] (2.5,1.1) rectangle (13.0,2.0);
    \transparent{1.0}

    \draw[-,ultra thick, draw=black] (2.0,1.1) -- (2.0,3.0) -- (3.5,3.0) -- (3.5,5.5) -- (7.5,5.5) -- (7.5,2.0) -- (13.0,2.0) -- (13.0,1.1);

    \draw[-,dashed,draw=red] (2.0,1.0) -- (2.0,6.5);
    \draw[-,dashed,draw=red] (13.0,1.0) -- (13.0,6.5);
    \draw[-,dashed,draw=red] (3.75,1.0) -- (3.75,6.5);

    \node[fill=white] at (2.0,6.5) {$t_{min}$};
    \node[fill=white] at (13.0,6.5) {$t_{max}$};
    \node[fill=white] at (3.75,6.5) {$t^*$};

    \node at (3.5,0.6) { $\tau^{<e}(P)$};
    \node at (7.5,0.6) { $T - \tau^{\geq e}(P)$};
    \draw[thick,black,decorate,decoration={brace,amplitude=8pt}] (7.5,0.4) -- (3.5,0.4) node[midway, below,yshift=-0.3cm]{$T - \tau(P)$};

    \draw[thick,black,decorate,decoration={brace,amplitude=12pt}] (7.5,5.5) -- (7.5,1.1) node[midway, right,xshift=0.3cm]{ $\begin{aligned} \Bigl( 1 - \beta^*(P) \Bigr) \frac{\alpha^*_e}{\alpha^*(P)} \end{aligned}$ };
  \end{scope}
\end{tikzpicture}
\end{center}
  \caption{Illustration of the box model with three boxes. The height profile is a solution constructed for the proof of Theorem \ref{thm:temp-rep:k-log-T-gap}.}
  \label{fig:temp-rep:dual-box-model}
\end{figure}
 The remaining proofs in this section rely on the following graphical interpretation of the duals (\ref{LP:TR-D}) and (\ref{LP:EX-D}).
 We consider an optimal solution $(\alpha^*, \beta^*) \in (\ref{LP:TR-D})$ and interpret the dual constraints and variables of (\ref{LP:TR-D}) as boxes in a two-dimensional space $[0,T] \times [0,1]$ as follows (see Figure \ref{fig:temp-rep:dual-box-model}):
 Let us consider the dual constraint for a path $P \in \paths$ and assume $\bar{\Delta}_z(P) \in \{0, T - \tau(P)\}$ for all $z \in \scenarios$, that is, $P$ only contains edges that have a very large value for $\Delta$, or zero.
 The general case is discussed in the proof of Theorem \ref{thm:temp-rep:delta-arbitrary}.

 We can regard the dual constraint of such a path $P$ as a set of boxes
 \[ B^e_P = \left[ \tau^{<e}(P), T - \tau^{\geq e}(P) \right] \times \left[ 0, \Bigl( 1 - \beta^*(P) \Bigr) \frac{\alpha_{e}^*}{\alpha^*(P)} \right], e \in P, \]
 where $\alpha^*(P) = \sum_{e \in P} \alpha^*_e$ and $\beta^*(P) = \sum_{z \in \scenarios : z \cap P \neq \emptyset} \beta^*_z$.
 Each box $B^e_P$ starts at the earliest possible arrival of flow at edge $e$ traveling along path $P$ and ends at the latest reasonable departure time from edge $e$.
 As soon as we consider the total area covered by the boxes of a path $P$, the connection between the boxes and their dual constraint becomes evident.
  \begin{align*}
  \sum_{e \in P} \text{vol}(B^e_P) = \sum_{e \in P} \left(T - \tau(P)\right)\Bigl(1 -  \beta^*(P)\Bigr)\frac{\alpha^*_e}{\alpha^*(P)}
  = T - \tau(P) - \sum_{\substack{z \in \scenarios : \\ z \cap P \neq \emptyset}} \beta^*_z \bar{\Delta}_z(P)
 \end{align*}
 Here we used that $\bar{\Delta}_z(P) = T -\tau(P)$ for $z \cap P \neq \emptyset$ as $\Delta_e \in \{0, \infty\}$.
 The sum of volumes of all boxes of a path is equal to the corresponding right hand side value of its dual constraint in (\ref{LP:TR-D}) (assuming all terms involving $\beta^*$ are moved to the right hand side).
 Furthermore, the volume of $B^e_P$ is a lower bound for the value of $\alpha^*_e$ for all edges $e \in P$.
 This can be seen by considering the dual constraint for path $P$ in (\ref{LP:TR-D}), multiplying it by $\alpha^*_e$ and rearranging the terms to have $vol(B^e_P)$ as the right hand side value.
 \begin{align}
  \alpha^*(P) = \sum_{e' \in P} \alpha^*_{e'} \geq \left( 1- \beta^*(P) \right) \left(T - \tau(P) \right) \notag \\
  \Leftrightarrow \alpha^*_e \alpha^*(P) \geq \alpha^*_e \left( 1- \beta^*(P) \right) \left( T - \tau(P) \right) \notag \\
  \Leftrightarrow \alpha^*_e \geq \frac{\alpha^*_e}{\alpha^*(P)} \left(1- \beta^*(P) \right) \left(T - \tau(P) \right) = vol(B^e_P). \label{bounds:alpha-volume-bound}
 \end{align}
 In the following, we will use the \emph{area covered} by boxes in order to construct a feasible solution for (\ref{LP:EX-D}) and use the \emph{volume} of the boxes in order to estimate the total cost of the solution.

 In a $k$-coverable instance, every box $B^e_P$ intersects at least one of the lines $t^e_i \times [0, 1]$.
 Recall that $t^e_i$ are the witnesses of edge $e$, that is, the points used in order to cover the interval graph in the definition of $k$-coverability.
 In the interval graph $H_e$, we had intervals $I_{e,P}:=[\tau^{<e}(P), T - \tau^{\geq e}(P)]$ for path $P$.
 Note that this is exactly the shadow of box $B^e_P$ on the horizontal $t$-axis.

 For the remainder of this part, we will restrict all arguments to boxes which intersect a specific line $t^* = t^e_i \times [0, 1]$.
 That is, we will consider only boxes $\{B^e_P : \tau^{<e}(P) \leq t^* \leq T - \tau^{\geq e}(P) \}$.
 Afterwards, in the proofs of the theorems, we will do a summation over all lines $t^e_i$.
 A box may intersect multiple lines - and thus be considered multiple times - but this does not harm the approximation guarantee.
 In particular, we assume that no box starts after, respectively ends before, $t^*$.
 By $t_{min}$ and $t_{max}$ we denote the earliest and latest points on $[0,T]$ covered by any of the considered boxes (see Figure \ref{fig:temp-rep:dual-box-model}).
 That is, $t_{min} = \min \{ \tau^{<e}(P) : \tau^{<e}(P) \leq t^* \leq T - \tau^{\geq e}(P) \}$ and $t_{max}$ is defined analogously.

\sectionheadline{Construction of a feasible solution for (\ref{LP:EX-D})}
 Let us examine feasible values for the dual variables $\alpha_e^t(z)$ in (\ref{LP:EX-D}) which can be constructed from the graphical interpretation of $(\alpha^*,\beta^*)$.

 First, let us see how $\alpha_e^t(z)$ can be set in order to become feasible for the constraints induced by a single path $P$.
 We set $\alpha_e^t(z) = 0$ for all $z \neq \emptyset$.
 Setting these variables to a positive value would be hard to analyze.
 But in order to become feasible, we set $\alpha_e^t(\emptyset)$ to the height of box $B^e_P$ in each slice $(t,t+1)$.
 This results in $\alpha_e^t(\emptyset)$ being the height of $B^e_P$, if $\tau^{<e}(P) \leq t < T - \tau^{\geq e}(P)$, or zero otherwise.
 Then
 \begin{align*}
  \sum_{e \in P} \sum_{z \in \scenarios} \alpha_e^{i + \tau^{<e}(P) + \Delta_z^{<e}(P)}(z)
  = \sum_{e \in P} \left( 1 - \sum_{z \in \scenariosDualPath{i}{P}} \beta^*_z \right)\frac{\alpha_e^*}{\alpha^*(P)}
  = 1 - \sum_{\substack{z \in \scenariosDualPath{i}{P}}} \beta^*_z.
 \end{align*}
 That means, if we had only a constraint for a single path, the solution would be feasible.
 But we have many paths and the height of their boxes $B^e_P$ differs.

 Hence, if we set $\alpha_e^t(\emptyset)$ to the maximal height of all boxes $B^e_P$ intersecting slice $(t,t+1)$, we obtain a feasible solution $(\alpha,\beta^*) \in (\ref{LP:EX-D})$.
 Formally, the solution can be described as
 \[ \alpha_e^t(\emptyset) = \max_{P \in \mathcal{P}_e^t} \left\{ \left( 1 - \sum_{z \in \scenariosDualPath{t}{P} } \beta^*_z \right)\frac{\alpha_e^*}{\alpha^*(P)} \right\}, \]
 where $\mathcal{P}_{e}^t = \{P : e \in P \wedge \tau^{<e}(P) \leq t < T - \tau^{\geq e}(P) \}$ describes the set of paths which can utilize edge $e$ at time $t$.
 And $\alpha_e^t(z) = 0$ for all other variables.
 The height profile in Figure \ref{fig:temp-rep:dual-box-model} depicts this solution.

 The constructed solution is feasible for (\ref{LP:EX-D}).
 We will now analyze its total cost.
 Intuitively, we argue as follows:

 The sum of dual variables, $\sum_{0 \leq t < T} \alpha_{e}^t(\emptyset)$ for any fixed edge $e$ is equal to the area covered by the union of all boxes $B^e_P$ induced by paths $P \in \mathcal{P}$ such that $e \in P$.
 Thus, in order to prove the $O(k \log T)$-approximation factor for Theorem~\ref{thm:temp-rep:k-log-T-gap}, we will show that, for any fixed edge $e \in E$, the total area covered by the union of all boxes is bounded by $O(k \log T)$ times the area of the largest box.
 Recall that (\ref{bounds:alpha-volume-bound}) implies that the value of $\alpha^*_e$ in the temporally repeated solution is lower bounded by the area of the largest box.
 With this, we can estimate the total solution cost of $(\alpha,\beta^*)$ in terms of $(\alpha^*,\beta^*)$.

 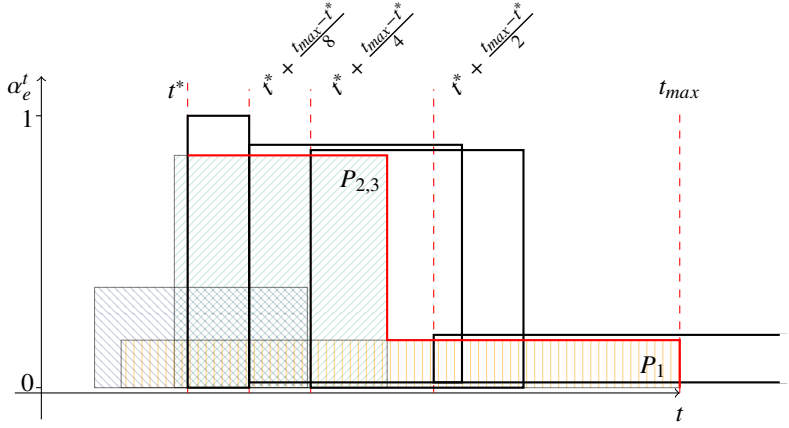
\begin{figure}[tb]
\begin{center}
\begin{tikzpicture}[scale=0.7, every node/.style={scale=1.2}]
  \draw[fill=white,draw=none,use as bounding box] (0,0.25) rectangle (15.5,8.5);

  % Scale
  \begin{scope}
    \draw[->] (0.5,1) -- (13,1);
    \draw[->] (1,0.5) -- (1,7);

    \node at (0.74,1.2) {$0$};
    \draw[] (0.9,1.1) -- (1,1.1);

    \node at (0.74,6.2) {$1$};
    \draw[] (0.9,6.25) -- (1,6.25);

    \node at (0.6,6.8) {$\alpha^t_{e}$};
    \node at (13,0.6) {$t$};
  \end{scope}

  \begin{scope}
    \transparent{0.5}
    \draw[pattern=north east lines,pattern color=darkgreen] (3.5,1.1) rectangle (7.5,5.5);
    \draw[pattern=north west lines,pattern color=darkblue] (2.0,1.1) rectangle (6.0,3.0);
    \draw[pattern=vertical lines,pattern color=darkorange] (2.5,1.1) rectangle (13.0,2.0);
    \transparent{1.0}

    \draw[-,dashed,draw=red] (13,1.0) -- (13,7);
    \draw[-,dashed,draw=red] (3.75,1.0) -- (3.75,7);
    \node[fill=white] at (13,6.75) {$t_{max}$};
    \node[fill=none] at (3.55,6.75) {$t^*$};

    \draw[-,dashed,draw=red] (8.375,1.0) -- (8.375,6.75);
    \node[fill=none,rotate = 45] at (9.5,7.5) {$t^* + \frac{t_{max} - t^*}{2}$};
    \node at (12.5,1.5) {$P_1$};
    \draw[draw=black,thick] (8.375,1.2) rectangle (15,2.1);
    \draw[draw=white,fill=white,thick] (14.9,1.2) rectangle (15.1,2.1);

    \draw[-,dashed,draw=red] (6.0625,1.0) -- (6.0625,6.75);
    \node[fill=none, rotate = 45] at (7.25,7.5) {$t^* + \frac{t_{max} - t^*}{4}$};
    \node at (7.0,5.0) {$P_{2,3}$};
    \draw[draw=black,thick] (6.0625,1.1) rectangle (10.0625,5.6);

    \draw[-,dashed,draw=red] (4.90625,1.0) -- (4.90625,6.75);
    \node[fill=none, rotate = 45] at (5.95,7.5) {$t^* + \frac{t_{max} - t^*}{8}$};

    \draw[draw=black,thick] (4.90625,1.2) rectangle (8.90625,5.7);

    \draw[draw=black,thick] (3.75,1.1) rectangle (4.90625,6.25);

    \draw[draw=red,thick] (3.75,5.5) -- (7.5,5.5) -- (7.5,2.0) -- (13.0,2.0) -- (13.0,1.1);
  \end{scope}
\end{tikzpicture}
\end{center}
  \caption{Illustration of the proof of Theorem \ref{thm:temp-rep:k-log-T-gap}.
  The interval $[t^*,t_{max}]$ contains three boxes (orange,blue,green).
  The total area is covered by copies of $P_i, i \geq 1$, the copies are black (slightly moved up/down to distinguish different copies).
  The box used in iteration $2$ and $3$ is the same, moved to a different offset.
  We also assume that $\frac{t_{max} - t^*}{8} \leq \alpha^*_e$, hence, the final box is just the slice of height $[0,1]$.
  The height profile of the solution is depicted in red.}
  \label{fig:temp-rep:dual-covering}
\end{figure}

\begin{proof}[Proof of Theorem \ref{thm:temp-rep:k-log-T-gap}]
In order to prove the theorem, we will show $opt(\ref{LP:EX-D}) \leq O(k \log T) opt(\ref{LP:TR-D})$, which is sufficient in combination with (\ref{thm:temp-rep:bound-strategy}).
Therefore, we will use the argumentation from the preceding paragraphs, in particular, we will use the constructed solution $\alpha_e^t(z)$.
We will show
\begin{align}
 \sum_{ 0 \leq t < T} \sum_{z \in \scenarios} \alpha_e^t(z)
 = vol \left(\bigcup_{\substack{P \in \mathcal{P} :\\ e \in P}} B^e_P \right)
 \leq \sum_{i=1}^{k} vol \left(\bigcup_{\substack{P \in \mathcal{P}_e^{t_i}}} B^e_P \right)
 \leq O(k \log T) \alpha^*_e \quad \forall e \in E. \label{proof:temp-rep:l-log-T-gap:pointwise}
\end{align}
Recall that $(\alpha^*, \beta^*)$ was an optimum solution to (\ref{LP:TR-D}).
Thus, result (\ref{proof:temp-rep:l-log-T-gap:pointwise}) then will prove
$$ opt(\ref{LP:EX-D}) \leq \sum_{e \in E} u_e \sum_{ 0 \leq t < T} \sum_{z \in \scenarios} \alpha_e^t(z) \leq O(k \log T) \sum_{e \in E} u_e \alpha^*_e = O(k \log T) opt(\ref{LP:TR-D}). $$

 Without loss of generality, we can assume $\sum_{z \in \scenarios}\beta^*_z = 1$.
 For the remainder of the proof, we will consider the graphical interpretation of the duals from the preceding paragraphs and fix an edge $e \in E$ and witness at $t^* = t^e_i$.
 For ease of notation, we use $\mathcal{P}$ to denote the set $\mathcal{P}_e^{t^*}$ of all paths whose boxes intersect $t^*$.
 If a path intersects multiple witnesses, we consider it multiple times.

 We will show that the total volume of the union of the boxes intersecting $t^*$ can be bounded by $O(\log T)\alpha_e^*$.
 Summation over $t_i^e$ with $i \le k$ yields the result.

 Since every box $B^e_P$ for path $P \in \mathcal{P}$ intersects the point $t^*$, it is easy to see that the sets of paths $\mathcal{P}^\ell = \{P \in \mathcal{P} : T - \tau^{\geq e}(P) \geq t^* + (t_{max} - t^*) 2^{-\ell} \}$ are monotonically increasing, that is, $\mathcal{P}^i \subseteq \mathcal{P}^{i+1}$.
 The set $\mathcal{P}^\ell$ contains all rectangles covering some area to the right of the line $t^* + (t_{max} - t^*) 2^{-\ell}$.

 Now, let us consider the set $\mathcal{P}^1$ and pick the path $P_1 \in \mathcal{P}^1$ whose box $B_1$ is the highest among all candidates, that is,
 \[ P_1 = \arg\max_{P \in \mathcal{P}^1} \left( 1 - \sum_{z \in \scenariosDualPath{t^*}{P}} \beta^*_z \right)\frac{\alpha_e^*}{\alpha^*(P)}. \]
 $B_1$ has the largest height among all boxes which intersect the interval $[t^* + (t_{max} - t^*)\frac{1}{2}, t_{max}]$.
 Furthermore, it has a total width exceeding the width of the interval, as it contains both, $t^*$ and $t^* + (t_{max} - t^*)\frac{1}{2}$.
 Hence, we can take a copy of $B_1$, shift it such that it starts at $t^* + (t_{max} - t^*)/2$ and cover the total area of the interval $[t^* + (t_{max} - t^*)/2, t_{max}]$.
 An example is given in Figure \ref{fig:temp-rep:dual-covering}.

 Analogously, we can proceed with increasing $\ell$ in order to pick a path $P_\ell \in \mathcal{P}^\ell$ whose box $B_\ell$ covers the interval $[t^* + (t_{max} - t^*) 2^{-\ell}, t^* + (t_{max} - t^*) 2^{-\ell+1}]$.
 As soon as $t^* + (t_{max} - t^*) 2^{-\ell} \leq \alpha^*_e$, we can stop as the remaining area in the interval $[t^*, t^* + \alpha^*_e]$ is at most $\alpha^*_{e}$.
 The total area consumption of chosen boxes is bounded by $O(\log T) \alpha^*_e$ as the length of the remaining interval to be covered is divided by two in each step.

 In summary, we have just shown that the total area covered by boxes in the interval $[t^*, t_{max}]$ can be bounded.
 By symmetry, the same arguments also hold for the area covered by boxes in the interval $[t_{min}, t^*]$.
 Here, we shift boxes to the left instead of to the right.
 So the total area covered by boxes that intersect $t^*$ can be bounded by $2 \cdot O(\log T) \alpha^*_e$.
 Recall $\mathcal{P} = \cup_{1 \leq i \leq k} \{ P \in \mathcal{P} : P \cap t^e_i \neq \emptyset \}$, hence,
 $$vol \left(\bigcup_{\substack{P \in \mathcal{P} :\\ e \in P}} B^e_P \right)
 \leq \sum_{i=1}^{k} vol \left(\bigcup_{\substack{P \in \mathcal{P}_e^{t_i}}} B^e_P \right) \leq \sum_{i=1}^k 2 \cdot O(\log T) \alpha^*_e = O(k \log T) \alpha^*_e$$
 holds, which concludes the proof. \qed
\end{proof}

Since we restricted the preceding argumentation only to the case when $\Delta \in \{0,\infty\}$, we will now generalize it towards arbitrary delays in the proof of Theorem \ref{thm:temp-rep:delta-arbitrary}.

\begin{proof}[Proof of Theorem \ref{thm:temp-rep:delta-arbitrary}]
 We begin the proof by generalizing the box model to general scenarios.
 Let us fix a path $P$.
 Again, we want to construct geometric objects $B^e_P$ such that $\sum_{e \in P} vol(B^e_P)$ equals the right hand side value of the dual constraint of path $P$ in (\ref{LP:TR-D}) (assuming that all terms involving $\beta$ are moved to the right hand side).

 Therefore, let us start with the following boxes
 \[ \hat{B}^e_P = \left[ \tau^{<e}(P), T - \tau^{\geq e}(P) \right] \times \left[ 0, \frac{\alpha^*_e}{\alpha^*(P)} \right]. \]
 Let us also fix an edge $e \in P$ and order the scenarios $z_i, 1 \leq i \leq \ell$ with $\beta^*_{z_i} > 0$ such that $\bar{\Delta}_{z_i}(P) \geq \bar{\Delta}_{z_{i+1}}(P)$ for all $1 \leq i < \ell$.
 For each $1 \leq i \leq \ell$, we cut out a section of box $B^e_P$ of height $\beta^*_{z_i}$ starting at the top right boundary of the box.
 That is, for each scenario, define the following rectangles
 \[ A^e_{z_i} = \left[ T - \tau^{\geq e}(P) - \bar{\Delta}_{z_i}(P), T - \tau^{\geq e}(P) \right] \times \left[ \left( 1 - \sum_{j=1}^i \beta^*_{z_j} \right) \frac{\alpha^*_e}{\alpha^*(P)}, \left( 1 - \sum_{j=1}^{i-1} \beta^*_{z_j} \right) \frac{\alpha^*_e}{\alpha^*(P)} \right] \]
 and cut these out of $\hat{B}^e_P$.
 So we end up with a polygon $B^e_P = \hat{B^e_P} \setminus \left( \bigcup_{i=1}^\ell A^e_{z_i} \right)$ (see the box shaped as a staircase in Figure \ref{fig:temp-rep:dual-box-model-delta-arb}).
 Also note that the boxes $A$ do not overlap - except for their boundaries - and are always contained in $\hat{B^e_P}$.
 Hence, it is simple to compute the total volume of the polygons of any path.
 \begin{align*}
\sum_{e \in P} vol(B^e_P)
&= \sum_{e \in P} \left( \frac{\alpha^*_e}{\alpha^*(P)} \left( T - \tau(P) \right) - \sum_{i=1}^{\ell} vol(A_{z_i}) \right) \\
&= \sum_{e \in P} \left( \frac{\alpha^*_e}{\alpha^*(P)} \left( T - \tau(P) \right) - \sum_{i=1}^{\ell} \bar{\Delta}_{z_i}(P) \beta^*_{z_i} \frac{\alpha^*_e}{\alpha^*(P)} \right) \\
&= T - \tau(P) - \sum_{\substack{z \in \scenarios : \\ z \cap P \neq \emptyset}} \bar{\Delta}_z(P) \beta^*_z.
\end{align*}
Analogously to (\ref{bounds:alpha-volume-bound}) we can deduce that $\alpha^*_e \geq vol(B^e_P)$ holds for all $P$ and $e \in P$, in other words, the volume of polygons is again a lower bound on the variables of the dual temporally repeated solution.

\begin{figure}[tb]
\centering
\resizebox{0.7\textwidth}{!}{
\begin{tikzpicture}[scale=1, every node/.style={scale=2}]
  \draw[fill=white,draw=none,use as bounding box] (0,0.25) rectangle (12.5,6.85);

  % Scale
  \begin{scope}
    \draw[->] (0.5,1) -- (11,1);
    \draw[->] (1,0.5) -- (1,6);

    \node at (0.74,1.2) {$0$};
    \draw[] (0.9,1.1) -- (1,1.1);

    \node at (0.6,5.8) {$\alpha^t_{e}$};
    \node at (11,0.6) {$t$};
  \end{scope}

  \begin{scope}

    \draw[draw=darkblue,ultra thick] (2,1.1) -- (2,5) -- (4,5) -- (4,4) -- (5,4) -- (5,2) -- (7,2) -- (7,1.1) -- (2,1.1);
    \draw[draw=red,dashed, ultra thick] (1.95,1.05) rectangle (7.05,5.05);

    \node at (2.5,5.5) {$A_{z_1}$};
    \node at (4.5,4.5) {$A_{z_2}$};
    \node at (6,3) {$A_{z_3}$};

    \transparent{0.4}
    \draw[pattern=north east lines,pattern color=red] (5,2) rectangle (7,4);
    \draw[pattern=north west lines,pattern color=darkblue] (4,4) rectangle (7,5);
    \draw[pattern=vertical lines,pattern color=darkorange] (2,5) rectangle (7,6);

    \transparent{1}

    \draw[thick,black,decorate,decoration={brace,amplitude=8pt}] (7,6) -- (7,5) node[midway, right,xshift=+0.25cm]{$\beta_{z_1}(P) \frac{\alpha^*_e}{\alpha(P)}$};
    \draw[thick,black,decorate,decoration={brace,amplitude=8pt}] (2,6) -- (7,6) node[midway, above,yshift=0cm]{$\bar{\Delta}_{z_1}(P)$};

    \draw[thick,black,decorate,decoration={brace,amplitude=8pt}] (7,5) -- (7,4) node[midway, right,xshift=+0.25cm]{$\beta_{z_2}(P) \frac{\alpha^*_e}{\alpha(P)}$};
    \draw[thick,black,decorate,decoration={brace,amplitude=8pt}] (4,5) -- (7,5) node[midway, above,yshift=0cm]{$\bar{\Delta}_{z_2}(P)$};

    \draw[thick,black,decorate,decoration={brace,amplitude=8pt}] (7,4) -- (7,2) node[midway, right,xshift=+0.25cm]{$\beta_{z_3}(P) \frac{\alpha^*_e}{\alpha(P)}$};
    \draw[thick,black,decorate,decoration={brace,amplitude=8pt}] (5,4) -- (7,4) node[midway, above,yshift=0cm]{$\bar{\Delta}_{z_3}(P)$};
  \end{scope}
\end{tikzpicture}}
  \caption{Illustration of the box model for arbitrary values of $\Delta$.
  The boundary of the polygon $B^e_P$ is drawn in solid blue.
  The corresponding rectangular box $\bar{B}^e_P$ from the proof of Theorem \ref{thm:temp-rep:delta-arbitrary} is drawn red, dashed.}
  \label{fig:temp-rep:dual-box-model-delta-arb}
\end{figure}
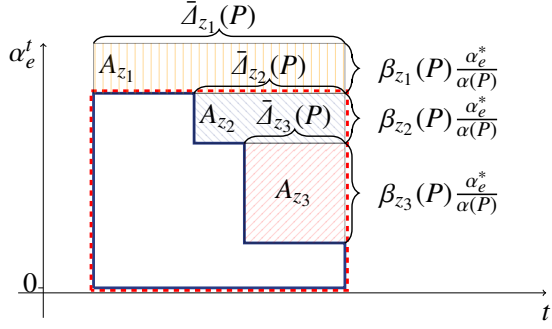

Note that the preceding \emph{construction of a feasible solution for (\ref{LP:EX-D})} used only variables $\alpha_e^t(\emptyset)$.
Hence, the solution constructed therein is not only feasible if $\Delta_e \in \{0,\infty\}$, but also for general $\Delta_e \in \ZZ_+$ by the same arguments.
Fix a path $P$ and set $\alpha_e^t(\emptyset)$ to the height of polygon $B^e_P$ in the slice $(t,t+1)$.
This time, note that the height of a polygon corresponding to a specific path $P$ and an edge $e \in P$ may differ depending on the actual slice.
Then the solution is feasible as the height of polygons in slice $(t,t+1)$ is, by construction, equal to $\frac{\alpha^*_e}{\alpha^*(P)} \left( 1 - \sum_{z \in \scenariosDualPath{t}{P}} \beta^*_z \right)$.

It remains to prove that the total cost of the constructed solution $\alpha$ is bounded by the total cost of the temporally repeated solution times at most a factor $O(\eta k \log T)$.
Unfortunately, the staircase structure of $B^e_P$ prevents us from covering the area as in the proof of Theorem \ref{thm:temp-rep:k-log-T-gap}.
If we take a copy and shift it to the left, the area to the left of $t^*$ will not necessarily be covered.

Therefore, we will use the following strategy.
In a first step, we will replace all polygons $B^e_P$ from the temporally repeated solution by rectangles $\bar{B}^e_P$.
This will cost at most a factor $\eta$.
The polygons will be replaced in such a way that the height profile with respect to $B^e_P$ is contained in the height profile with respect to $\bar{B}^e_P$.
Afterwards, we can apply exactly the same proof as in Theorem \ref{thm:temp-rep:k-log-T-gap}.
The proof will be concluded.

The replacement is done as follows.
Fix a path $P$ and $e \in P$ and consider the polygon $B^e_P$.
We want to replace it by a rectangle $\bar{B}^e_P$ that contains the polygon.
Therefore, let $h = \frac{\alpha^*_e}{\alpha^*(P)} \left( 1 - \sum_{z \in \scenarios : \bar{\Delta}_z(P) = T - \tau(P)} \beta^*_z \right)$ be the highest point of the polygon.
We will replace $B^e_P$ by the rectangle
$\bar{B}^e_P = \left[ \tau^{<e}(P), T - \tau^{\geq e}(P) \right] \times \left[ 0, h \right]$.
Clearly, it contains the polygon.
The factor we lose can be bounded as follows.
We use $\scenarios_1 = \{z \in \scenarios : \bar{\Delta}_z(P) = T - \tau(P) \}$ and $\scenarios_2 = \scenarios \setminus \scenarios_1$ to partition the scenarios.
The first set contains all scenarios that destroy the entire flow on path $P$, the latter contains scenarios that destroy only a smaller amount.
\begin{align*}
 \frac{vol(\bar{B}^e_P)}{vol(B^e_P)}
 &= \frac{\frac{\alpha^*_e}{\alpha^*(P)} \left( T - \tau(P) \right) \left( 1 - \sum_{z \in \scenarios_1} \beta^*_z \right)}{\frac{\alpha^*_e}{\alpha^*(P)} \left(T - \tau(P) - \sum_{z \in \scenarios} \beta^*_z \bar{\Delta}_z(P) \right)} \\
 &= \frac{\left( T - \tau(P) \right) \left( 1 - \sum_{z \in \scenarios_1} \beta^*_z \right)}{ T - \tau(P) - \sum_{z \in \scenarios_1} \beta^*_z \left(T - \tau(P) \right) - \sum_{z \in \scenarios_2} \beta^*_z \bar{\Delta}_z(P)} \\
 &= \frac{\left( T - \tau(P) \right) \left( 1 - \sum_{z \in \scenarios_1} \beta^*_z \right)}{ \left(T - \tau(P) \right) \left( 1 - \sum_{z \in \scenarios_1} \beta^*_z \right) - \sum_{z \in \scenarios_2} \beta^*_z \bar{\Delta}_z(P)} \\
 &\leq \frac{\left( T - \tau(P) \right) \left( 1 - \sum_{z \in \scenarios_1} \beta^*_z \right)}{ \left(T - \tau(P) \right) \left( 1 - \sum_{z \in \scenarios_1} \beta^*_z \right) - \max_{z \in \scenarios_2}\{ \bar{\Delta}_z(P)\} \sum_{z \in \scenarios_2} \beta^*_z} \\
 &= \frac{T - \tau(P)}{ T - \tau(P) - \max_{z \in \scenarios_2}\{ \bar{\Delta}_z(P)\} } \leq \eta.
\end{align*}
The last equality is due to $\sum_{z \in \scenarios_2} \beta^*_z = 1 - \sum_{z \in \scenarios_1} \beta^*_z$.
The proof is concluded since
\[opt(\ref{LP:EX-D}) \leq \sum_{e \in E}{u_e\sum_{0\leq t < T}{\sum_{z \in \scenarios}{\alpha_e^t(z)}}} \leq \eta \sum_{e \in E}{u_e\sum_{0\leq t < T}{\sum_{z \in \scenarios}{\bar{\alpha}_e^t(z)}}}\leq O(\eta k \log T) \cdot opt(\ref{LP:TR-D}).\] \qed
\end{proof}

\section{Open Problems}
\label{sec:conclusions}

In this work, we provided a first step towards modeling and solving robust flow over time problems.
We have shown that temporally repeated flows under the presence of uncertainty are no longer optimal.
We provided lower and upper bounds on the optimality gap.
Moreover, we have shown that the relation between delays, the time horizon and the longest path length has a strong impact on the complexity status.
We have shown that, for instances with $T$-bounded path length, an optimum temporally repeated solution can be computed in polynomial time.

Clearly, many interesting questions remain open.
We want to point out a few of these explicitly which may inspire follow-up work.

In our opinion, one of the biggest questions is clearly the complexity status of robust maximum flow over time if $\Gamma$ is bounded by a constant, even for $\Gamma = 1$.
Since the static counterpart is solvable in polynomial time for $\Gamma = 1$, it is interesting to see if the complexity status changes already due to the introduction of travel times.
The same question could be asked for temporally repeated solutions.
Although we were able to show that temporally repeated solutions can
be computed in polynomial time if the instance has $T$-bounded path
length, we were not able to provide insight for the case in which this
setting is not true.

A related question is the following.
Note that Proposition \ref{prop:temp-rep:LP-T-not} implies that, for $T$-bounded instances, there always exists an optimum temporally repeated solution which utilizes at most $2 |E|$ paths.
This is due to the number of constraints in the reformulated LP which we solve.
It would be interesting to understand if, for general instances or even for general solutions, there always exists an optimum solution which utilizes only a polynomial number of paths.
The LP formulations used in Section \ref{sec:bounds} contain an exponential number of variables and constraints.
Hence, it is unclear whether there always exists a basic feasible solution which is \emph{not} of exponential size.
If one could show that there are instances for which every optimum solution is of exponential size, one should also ask if the gap between polynomial size solutions and exponential size solutions can be bounded and if such solutions can also be computed in polynomial time.

The optimality gaps provided in Section \ref{sec:bounds} are not necessarily tight.
We have seen that for $k = \eta = 1$ in $T$-bounded instances, the lower bound from Proposition \ref{prop:temp-rep:log-gap} and the upper bound from Theorem \ref{thm:temp-rep:delta-arbitrary} coincide up to constant factors.
Apart from that, larger gaps remain.
It would be interesting to close these gaps, for example by constructing instances for $k$-coverable instances with $k > 1$ matching the upper bound from the Theorem.
This is especially interesting for instances with $\Delta \in \{0,\infty\}$ as discussed in Theorem \ref{thm:temp-rep:k-log-T-gap}.
Here, the gap between our lower and upper bounds is even bigger.
Moreover, it would be interesting to see if the dependency on $\eta$ in Theorem \ref{thm:temp-rep:delta-arbitrary} is necessary.
In the proof strategy we used, we were not able to remove it, although our lower bound instances always satisfied $\eta = 1$ and we were not able to construct gap instances with $\eta > 1$.

As mentioned before, the model studied here is a natural formulation for robust flow over time problems under uncertain travel times, using the well-established $\Gamma$-robustness model.
Due to the worst-case nature of the models introduced here, the resulting robust counterpart is quite restrictive and in general yields conservative solutions.
However, its study is interesting in order to understand robust flow over time problems.
Based upon the results obtained here, more advanced and potentially less conservative models could also be studied in the future.

\section*{Acknowledgments}

We thank the reviewers for their very careful reading of the manuscript and their valuable comments.
We thank the DFG for their support within Project B06 in CRC TRR 154.

\Urlmuskip=0mu plus 1mu


\begin{thebibliography}{10}
\providecommand{\url}[1]{{#1}}
\providecommand{\urlprefix}{URL }
\expandafter\ifx\csname urlstyle\endcsname\relax
  \providecommand{\doi}[1]{DOI~\discretionary{}{}{}#1}\else
  \providecommand{\doi}{DOI~\discretionary{}{}{}\begingroup
  \urlstyle{rm}\Url}\fi

\bibitem{aneja2001maximizing}
Aneja, Y.P., Chandrasekaran, R., Nair, K.: Maximizing residual flow under an
  arc destruction.
\newblock Networks \textbf{38}(4), 194--198 (2001)

\bibitem{aronson1989survey}
Aronson, J.E.: A survey of dynamic network flows.
\newblock Annals of Operations Research \textbf{20}(1), 1--66 (1989)

\bibitem{ben2009robust}
Ben-Tal, A., El~Ghaoui, L., Nemirovski, A.: Robust optimization.
\newblock Princeton University Press (2009)

\bibitem{bertsimas2013robust}
Bertsimas, D., Nasrabadi, E., Stiller, S.: Robust and adaptive network flows.
\newblock Operations Research \textbf{61}(5), 1218--1242 (2013)

\bibitem{BeSi03}
Bertsimas, D., Sim, M.: Robust discrete optimization and network flows.
\newblock Mathematical Programming Series B \textbf{98}, 49 -- 71 (2003)

\bibitem{dilworth1950decomposition}
Dilworth, R.P.: A decomposition theorem for partially ordered sets.
\newblock Annals of Mathematics \textbf{51}, 161--166 (1950)

\bibitem{Disser2015newhardnessstatic}
Disser, Y., Matuschke, J.:  (2015).
\newblock Private communication

\bibitem{du2007wronghardness}
Du, D., Chandrasekaran, R.: The maximum residual flow problem: {NP}-hardness
  with two-arc destruction.
\newblock Networks \textbf{50}(3), 181--182 (2007)

\bibitem{ford1958constructing}
Ford~Jr, L.R., Fulkerson, D.R.: Constructing maximal dynamic flows from static
  flows.
\newblock Operations Research \textbf{6}(3), 419--433 (1958)

\bibitem{fortune1980directed}
Fortune, S., Hopcroft, J., Wyllie, J.: The directed subgraph homeomorphism
  problem.
\newblock Theoretical Computer Science \textbf{10}(2), 111--121 (1980)

\bibitem{grotschel1981ellipsoid}
Gr{\"o}tschel, M., Lov{\'a}sz, L., Schrijver, A.: The ellipsoid method and its
  consequences in combinatorial optimization.
\newblock Combinatorica \textbf{1}(2), 169--197 (1981)

\bibitem{gupta1982efficient}
Gupta, U.I., Lee, D.T., Leung, J.T.: Efficient algorithms for interval graphs
  and circular-arc graphs.
\newblock Networks \textbf{12}(4), 459--467 (1982)

\bibitem{karp1972reducibility}
Karp, R.M.: Reducibility among combinatorial problems.
\newblock In: Complexity of computer computations, pp. 85--103. Springer (1972)

\bibitem{koch2011}
Koch, R., Nasrabadi, E., Skutella, M.: Continuous and discrete flows over time.
  a general model based on measure theory.
\newblock Mathematical Methods of Operations Research \textbf{73}(3), 301--337
  (2011)

\bibitem{koch2015evaluating}
Koch, T., Hiller, B., Pfetsch, M.E., Schewe, L.: Evaluating gas network
  capacities.
\newblock SIAM (2015)

\bibitem{koehler2005}
K\"ohler, E., Skutella, M.: Flows over time with load-dependent transit times.
\newblock SIAM Journal of Optimization \textbf{15}(4), 1185--1202 (2005)

\bibitem{2015wronghardnessstatic}
Matuschke, J., McCormick, T.S., Oriolo, G., Peis, B., Skutella, M.:
  \url{http://materials.dagstuhl.de/files/15/15412/15412.JannikMatuschke.ExtendedAbstract.pdf}
  (2015)

\bibitem{skutella2009introduction}
Skutella, M.: An introduction to network flows over time.
\newblock In: W.J. Cook, L.~Lov{\'a}sz, J.~Vygen (eds.) Research Trends in
  Combinatorial Optimization, pp. 451--482. Springer (2009)

\bibitem{W93}
Wood, R.K.: Deterministic network interdiction.
\newblock Mathematical and Computer Modelling \textbf{17}(2), 1--18 (1993)

\end{thebibliography}
\end{document}